\documentclass[a4paper]{article}

\usepackage{amsmath}
\usepackage{amsthm}
\usepackage{amssymb}

\usepackage[left=3cm,right=3cm,
    top=3cm,bottom=3.5cm,bindingoffset=0cm]{geometry}

\usepackage[numbers, sort&compress]{natbib}

\usepackage{graphicx}

\theoremstyle{plain}
    \newtheorem{lemma}{Lemma}
    \newtheorem{theorem}{Theorem}
    \newtheorem{statement}{Proposition}

\theoremstyle{remark}

  \theoremstyle{definition}
    \newtheorem{definition}{Definition}
        \newtheorem{remark}{Remark}
                   \newtheorem{example}{Example}

\newcommand{\Ker}[1]{\mathrm{Ker} \, #1}
\newcommand{\rank}[1]{\mathrm{rank} \, #1}

\newcommand{\corank}[1]{\mathrm{corank} \, #1}

\newcommand{\diff}[1]{\mathrm{d}  #1}

\newcommand{\R}{\mathbb{R}}
\newcommand{\Complex}{\mathbb{C}}

\newcommand{\T}{\mathrm{T}}

\newcommand{\CP}{\overline{\mathbb{C}}}
\newcommand{\RP}{\overline{\mathbb{R}}}
\newcommand{\Hom}{\mathrm{H}}

\newcommand{\wave}{\widetilde}

\newcommand{\g}{\mathfrak{g}}

\newcommand{\so}{\mathfrak{so}}

\newcommand{\sL}{\mathfrak{sl}}

\newcommand{\zenter}{\mathcal{Z}}
\newcommand{\F}{\mathcal{F}}

%\journal{Journal of Geometry and Physics}

\title{Stability in Bi-Hamiltonian Systems and Multidimensional Rigid Body}

\author{Anton Izosimov\footnote{Dept. of Mechanics and Mathematics, Moscow State University, e-mail: a.m.izosimov@gmail.com}}
\date{}
\begin{document}

\maketitle
%\begin{frontmatter}

%\ead{izosimov@mech.math.msu.su}
%\address{Chair of Differential Geometry and Applications, Dept. of Mechanics and Mathematics, Moscow State University, Russia, Moscow, 119991, Leninskie Gori, 1}

\begin{abstract}
The presence of two compatible Hamiltonian structures is known to be one of the main, and the most natural, mechanisms of integrability.
For every pair of Hamiltonian structures, there are associated conservation laws (first integrals). Another approach is to consider the second Hamiltonian structure on its own as a tensor conservation law. The latter is more intrinsic as compared to scalar conservation laws derived from it and, as a rule, it is ``simpler''. Thus it is natural to ask: can the dynamics of a bi-Hamiltonian system be understood by studying its Hamiltonian pair, without studying the associated first integrals?\par
In this paper, the problem of stability of equilibria in bi-Hamiltonian systems is considered and it is shown that the conditions for nonlinear stability can be expressed in algebraic terms of linearization of the underlying Poisson pencil. This is used to study stability of stationary rotations of a free multidimensional rigid body.
\end{abstract}
%
%\begin{keyword}
%bi-Hamiltonian systems \sep stability \sep Energy-Casimir method \sep multidimensional rigid body
%
%\MSC 37K10 \sep 37J25
%\end{keyword}
%
%\end{frontmatter}

\section{Introduction}
Since the pioneering works \cite{Magri, GD, Reiman}, the presence of two compatible Hamiltonian structures is known to be one of the main, and the most natural, mechanisms of integrability. This mechanism is responsible for the integrability of many equations coming from mechanics, mathematical physics and geometry (see, for example, \cite{AH} and references therein). The idea is that for every pair of Hamiltonian structures, there are associated conservation laws (first integrals). \par

Accordingly, a bi-Hamiltonian structure is usually considered as a ``factory of conservation laws''. However, the second Hamiltonian structure on its own can be considered as a tensor conservation law. The latter is more intrinsic as compared to scalar conservation laws derived from it and, as a rule, it is ``simpler''. For example, the second Poisson structure for the Korteweg-de Vries equation \cite{Magri} is linear, while the first integrals are complicated polynomials given by a recurrence formula. Thus it is natural to ask: can the dynamics of a bi-Hamiltonian system be understood by studying its Hamiltonian pair, without studying the associated first integrals?\par

In this paper, the problem of stability of equilibria in bi-Hamiltonian systems is considered and it is shown that the conditions for \textbf{nonlinear} stability in the bi-Hamiltonian case can be expressed in terms of the \textbf{linear} part of the underlying Poisson pencil. This linear part appears to be a collection of Lie algebras, each carrying a two-cocycle. Thus, the problem of stability in bi-Hamiltonian systems can be considered as algebraic. \par

As it was noted above, the notions ``bi-Hamiltonian''  and ``integrable'' are closely related. For this reason, the method discussed in this paper can be viewed as a rather general method for stability investigation in integrable systems. This paper focuses on the finite-dimensional case, however an essential part of the construction works in infinite dimension as well\footnote{Certainly, the infinite-dimensional case needs a separate discussion.}. Also note that the theorem formulated in this paper can be easily generalized from equilibria to periodic and general quasi-periodic trajectories.\par

To the author's knowledge, the idea of studying dynamics by means of a bi-Hamiltonian structure was first suggested by A.V.Bolsinov. In his paper \cite{Bolsinov} bi-Hamiltonian structure is used to describe the singular set of an integrable Hamiltonian system. 
Further developments are presented in the papers \cite{biham, SBS} devoted to a more detailed analysis of singularities of bi-Hamiltonian systems. In particular, in \cite{SBS} the notion of linearization of a Poisson pencil is introduced. This notion is used in the present paper to express stability conditions. \par

As an application, the stability problem for stationary rotations of a free multidimensional rigid body is solved. On the one hand, this problem is too complicated to be solved by a direct method (such as the Arnold method, see below), because the first integrals are polynomials of high degree. On the other hand, the bi-Hamiltonian structure of this problem is simple enough (in other words, the problem has complicated scalar conservation laws, but simple tensor conservation laws). This circumstance makes the application of the bi-Hamiltonian approach to a multidimensional rigid body extremely effective and provides a simple method for the determination of its stability.

\section{The Arnold method}
In the Hamiltonian case, stability in a linear approximation is always neutral and thus insufficient for a conclusion about nonlinear stability. To prove nonlinear stability in a Hamiltonian system, one usually uses the Arnold method (also known as the Energy-Casimir method, see \cite{Arnold}).  
The Arnold method can be formulated as follows: \begin{theorem}{Consider a Hamiltonian system $v$ on a Poisson manifold. Let $x$ be an equilibrium point of $v$ which belongs to a generic symplectic leaf $O$. Then $x$ is a critical point for the restriction of the energy to $O$. If this critical point is a non-degenerate minimum or maximum, then $x$ is stable.}\end{theorem}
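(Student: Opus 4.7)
The plan is to split the statement into the critical-point claim and the stability claim, and handle them independently.

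For the critical-point claim, observe that the Hamiltonian vector field $v$ of $H$ is automatically tangent to every symplectic leaf, and its restriction to $O$ coincides with the Hamiltonian vector field of $H|_O$ with respect to the symplectic form induced on $O$ by the Poisson tensor. Since $v(x) = 0$ and this symplectic form is non-degenerate, one obtains $\diff{(H|_O)}(x) = 0$, so $x$ is a critical point of $H|_O$.

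For the stability claim, my strategy is to construct a conserved function $F$, defined on a neighbourhood of $x$ in the ambient Poisson manifold, with a strict non-degenerate local extremum at $x$, and then invoke the classical Lyapunov stability theorem. Because $O$ is a generic symplectic leaf, the Poisson structure has locally constant corank $k$ near $x$, so there exist $k$ functionally independent local Casimirs $C_1, \ldots, C_k$ whose common level sets are the nearby leaves. I will take $F = H + \phi(C_1, \ldots, C_k)$ for a smooth function $\phi \colon \R^k \to \R$ to be chosen; since $H$ and each $C_i$ are first integrals, so is $F$. The critical-point claim gives $\diff{H}(x) \in \mathrm{span}(\diff{C_1}(x), \ldots, \diff{C_k}(x))$, so the first partial derivatives of $\phi$ at $C(x)$ can be prescribed so as to make $\diff{F}(x) = 0$. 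In local coordinates $(y_1, \ldots, y_m, z_1, \ldots, z_k)$ near $x$ in which $z_i = C_i$, the Hessian of $F$ at $x$ takes the block form
\[
\begin{pmatrix} A & B \\ B^{\top} & D + S \end{pmatrix},
\]
where $A$ is the Hessian of $H|_O$ at $x$ (sign-definite and non-degenerate by hypothesis), $B$ and $D$ are fixed, and $S$ is the arbitrary symmetric matrix of second derivatives of $\phi$ at $C(x)$.

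The main obstacle is then to verify that this free correction really suffices to make the entire Hessian sign-definite. I expect this to be handled by a Schur-complement argument: if $A$ is positive (respectively negative) definite, then for $S$ positive (resp.\ negative) definite with sufficiently large norm the whole block matrix is sign-definite, independently of $B$ and $D$. Once such an $F$ is produced, it is a smooth first integral with a strict non-degenerate local extremum at $x$, so its level sets near $x$ are arbitrarily small flow-invariant neighbourhoods of $x$, which yields Lyapunov stability of $x$ under the flow of $v$.
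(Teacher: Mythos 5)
Your proposal is correct, and it is precisely the standard energy--Casimir argument that the paper itself does not spell out but refers to via the citation of Arnold: restrict to the leaf to get criticality, then correct $H$ by a function of the local Casimirs (available near a generic leaf by the regular Darboux--Weinstein normal form) so that the modified integral has a non-degenerate extremum transversally as well, the Schur-complement step being the standard way to see that a large definite $S$ suffices. The only slip is terminological: the flow-invariant small neighbourhoods are the connected components of the \emph{sublevel} (or superlevel) sets of $F$ near $x$, not its level sets, which are spheres bounding them; with that trivial adjustment the Lyapunov argument goes through.
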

Since we deal with the finite-dimensional case, this method proves the nonlinear (Lyapunov) stability.\par
Now let the system under consideration be integrable. Then one can replace the energy in the formulation of the Arnold method by any linear combination of the conserved quantities. 
The extended method can be formulated as follows:\begin{theorem}\label{EAEC}{
Consider an integrable Hamiltonian system $v$ on a Poisson manifold. Let $x$ be an equilibrium point of $v$ which belongs to a generic symplectic leaf $O$. Let also $f_1, \dots, f_n$ be the first integrals of the system. Suppose that there exists a linear combination $f = \sum a_if_i$ such that $\diff f|_O = 0$ and $\diff^2f|_O > 0$. Then $x$ is stable.}\end{theorem}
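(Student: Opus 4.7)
The plan is to reduce Theorem \ref{EAEC} to the classical Arnold method (Theorem 1) by using a suitable first integral $F$ in place of the Hamiltonian. The starting point is that any linear combination of first integrals of $v$ is itself a first integral, so the given $f = \sum a_i f_i$ is conserved along the flow of $v$. What remains is to promote the non-degenerate minimum of $f|_O$ at $x$ to a non-degenerate minimum of some conserved function on the whole manifold near $x$; stability then follows from the standard Lyapunov function argument.

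First I would establish stability within the leaf $O$. The hypotheses $\diff f|_O = 0$ and $\diff^2 f|_O > 0$ at $x$ say that $x$ is a strict non-degenerate local minimum of $f|_O$, so for every sufficiently small $\eps > 0$ the connected component of $\{y \in O : f(y) < f(x) + \eps\}$ containing $x$ is a relatively compact open neighborhood of $x$ in $O$, and these neighborhoods form a fundamental system at $x$ within $O$. Since $f$ is conserved and the trajectories of $v$ remain on $O$, any trajectory that starts in such a neighborhood is trapped there, which is Lyapunov stability of $x$ within the leaf.

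To extend this to stability in the ambient Poisson manifold $M$, I would use the ``generic leaf'' hypothesis. At a regular point of the symplectic foliation one can locally choose functionally independent Casimir functions $c_1, \ldots, c_k$, with $k = \codim{O}$, whose joint level sets are the nearby symplectic leaves; each $c_j$ is automatically a first integral of $v$. Since $\diff f(x)$ annihilates $T_x O$ and the $\diff c_j(x)$ span this annihilator, there exist scalars $\mu_j$ with $\diff f(x) = \sum_j \mu_j \diff c_j(x)$. Replacing $f$ by $\tilde f := f - \sum_j \mu_j (c_j - c_j(x))$ leaves $f|_O$ unchanged up to a constant, so $\tilde f$ remains a first integral with a strict non-degenerate minimum of $\tilde f|_O$ at $x$, but now $\diff \tilde f(x) = 0$. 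Finally set
\[
F := \tilde f + \lambda \sum_{j=1}^{k}(c_j - c_j(x))^2, \qquad \lambda > 0.
\]
Then $\diff F(x) = 0$ and $\diff^2 F(x) = \diff^2 \tilde f(x) + 2\lambda \sum_j \diff c_j(x) \otimes \diff c_j(x)$. The first summand is positive definite on $T_x O = \bigcap_j \Ker{\diff c_j(x)}$, and the second vanishes on $T_x O$ but is positive definite on a transversal. A Schur-complement argument shows that for $\lambda$ large enough, $\diff^2 F(x)$ is positive definite on the whole of $T_x M$. Thus $F$ is a conserved function with a strict non-degenerate local minimum at $x$ in $M$, and its small sublevel sets provide compact invariant neighborhoods of $x$, yielding Lyapunov stability.

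The main technical point is the extension from leaf stability to ambient stability, which rests on the availability of local Casimirs near $x$; this is precisely what the ``generic leaf'' hypothesis delivers, via regularity of the Poisson structure (Weinstein splitting). The Casimir adjustment $\tilde f = f - \sum \mu_j(c_j - c_j(x))$ together with the choice of large $\lambda$ is a standard manoeuvre, so beyond this I do not expect any substantive analytic obstacle.
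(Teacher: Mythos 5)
Your proof is correct. Note that the paper does not actually prove Theorem \ref{EAEC} at all: it is quoted as the classical extended Arnold (Energy--Casimir) method with a reference to \cite{Arnold}, and your argument — subtracting a combination of local Casimirs so that $\diff \tilde f(x) = 0$, then adding $\lambda\sum_j (c_j - c_j(x))^2$ with $\lambda$ large to obtain a conserved function with a nondegenerate minimum at $x$, which then serves as a Lyapunov function — is precisely the standard proof of that method. The only ingredient you need, local Casimirs $c_1,\dots,c_k$ whose differentials span the annihilator of $\T_x O$, is exactly what the genericity (regularity) of the leaf guarantees, so there is no gap.
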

This extended formulation of the Arnold method is a powerful tool for investigating stability in integrable Hamiltonian systems. Nevertheless, in many dimensions it may be very complicated to compute the second differentials $\diff^2f_i$ and find a linear combination of them satisfying the conditions of the theorem. However, it turns out, that in the bi-Hamiltonian case this calculation can be replaced by a verification of a certain algebraic condition.

\section{Definitions}
\subsection{Poisson pencils and bi-Hamiltonian vector fields.}
\begin{definition}
	Two Poisson brackets on a manifold $M$ are called \textit{compatible}, if any linear combination of them is a Poisson bracket again.
The \textit{Poisson pencil} generated by two compatible Poisson brackets $P_0, P_\infty$ is the set
\begin{align}\label{pencil}\Pi = \{P_\lambda = P_0 - \lambda P_\infty\}_{\lambda \in \RP}.\end{align}
\end{definition}
Sometimes it also makes sense to consider $\lambda \in \CP$. 
\begin{remark}
	A Poisson pencil can also be defined as the set of all non-trivial linear combinations of two compatible brackets. However, it makes sense to consider Poisson brackets only up to proportionality, thus these two definitions may be considered as equivalent.\par
	The minus sign in (\ref{pencil}) is conventional.
\end{remark}

\begin{example}\label{argShiftMethod}
	Let $\g$ be an arbitrary Lie algebra, $a \in \g^{*}$. Consider the Lie-Poisson bracket\footnote{Throughout the whole paper Poisson brackets are identified with their Poisson tensors.} given by $P_0(x)(\xi, \eta) = x( [\xi, \eta] )$ and a constant bracket given by $P_\infty(x)(\xi, \eta) = a( [\xi, \eta] )$, the so-called bracket ``with the frozen argument''. It is easy to see that $P_0$ and $P_\infty$ are compatible. \par
	These two brackets are related to the so-called ``argument shift method'' introduced by A.S.Mishchenko and A.T.Fomenko \cite{MF}.
\end{example}	

\begin{definition} A vector field is bi-Hamiltonian with respect to a given pencil if it is Hamiltonian with respect to \textbf{all} brackets of the pencil. \end{definition}

\subsection{Rank and spectrum of a Poisson pencil.}

\begin{definition}
The \textit{rank of a pencil $\Pi$ at a point $x$} is the number
\begin{align}
\rank \Pi(x) = \max_{\lambda \in \CP} \rank P_\lambda(x).
\end{align}
The \textit{rank of a pencil $\Pi$} (on a manifold $M$) is the number
\begin{align}
\rank \Pi = \max_{x \in M} \rank \Pi(x).
\end{align}
\end{definition}

\begin{definition}
	The \textit{spectrum} of a pencil $\Pi$ at a point $x$ is the set
	\begin{align}
	\Lambda_\Pi(x) = \{ \lambda \in \CP \mid \rank P_\lambda(x) < \rank \Pi(x)\}.
	\end{align}
\end{definition}

\begin{example}\label{argShiftMethod2}
	Let $\Pi$ be the pencil from Example \ref{argShiftMethod}. If $a$ is regular, then the spectrum $\Lambda_\Pi(x)$ consists of $\lambda \in \Complex$ such that $x-\lambda a$ is singular in $\Complex \otimes \g^*$. If $a$ is singular, then the spectrum additionally contains $\lambda = \infty$.
\end{example}

\subsection{Linear Poisson pencils.}
\begin{definition}
	Let $\g$ be a Lie algebra and $A$ be a skew-symmetric bilinear form on it. 
	Then $A$ can be considered as a Poisson tensor on the dual space $\g^{*}$. Assume that the corresponding bracket is compatible with the Lie-Poisson bracket. In this case the Poisson pencil $\Pi(\g, A)$ generated by these two brackets is called the \textit{linear pencil} associated with the pair $(\g, A)$.
\end{definition}
	\begin{example}
	The pencil from Example \ref{argShiftMethod} is linear.
	\end{example}

The following is well known.
\begin{statement}
	A form $A$ on $\g$ is compatible with the Lie-Poisson bracket if and only if this form is a Lie algebra $2$-cocycle, i.e. 
	\begin{align}\label{cocycleCond}
		\diff A(\xi,\eta,\zeta) = A([\xi,\eta],\zeta) + A([\eta,\zeta],\xi) + A([\zeta,\xi],\eta) = 0
	\end{align}
	for any $\xi, \eta, \zeta \in \g$.
\end{statement}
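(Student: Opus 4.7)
The plan is to reduce the compatibility of $P_0$ (Lie--Poisson) and $P_A$ (the constant bracket coming from $A$) to a Jacobi-type identity evaluated on linear functions on $\g^*$, and then to observe that this identity is literally the cocycle equation \eqref{cocycleCond}.

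First I identify every $\xi \in \g$ with the linear function on $\g^*$ given by $x \mapsto x(\xi)$. Under this identification the two brackets act on linear functions as
\begin{align*}
\{\xi,\eta\}_0 = [\xi,\eta], \qquad \{\xi,\eta\}_A = A(\xi,\eta) \in \R,
\end{align*}
so $\{\cdot,\cdot\}_0$ sends linear functions to linear functions while $\{\cdot,\cdot\}_A$ sends them to constants. Since $P_A$ has constant coefficients in linear coordinates on $\g^*$, it automatically satisfies the Jacobi identity; and $P_0$ satisfies Jacobi by the Lie algebra axioms for $\g$. Hence $P_0$ and $P_A$ are compatible if and only if the "cross" term in the expansion of the Jacobiator of $P_\lambda = P_0 - \lambda P_A$ vanishes, i.e. the Schouten bracket $[P_0, P_A]$ is zero.

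Next I check this vanishing on triples of linear functions, which suffices because $P_0$ and $P_A$ are determined by their values on linear functions and both sides of the Jacobi identity are derivations in each argument. For linear $\xi,\eta,\zeta$ I compute
\begin{align*}
\{\{\xi,\eta\}_{0-\lambda A},\zeta\}_{0-\lambda A} = \{[\xi,\eta] - \lambda A(\xi,\eta),\zeta\}_{0-\lambda A} = [[\xi,\eta],\zeta] - \lambda A([\xi,\eta],\zeta),
\end{align*}
where the constant $A(\xi,\eta)$ has vanishing bracket with everything. Summing cyclically, the $\lambda^0$ terms give $\sum_{\mathrm{cyc}}[[\xi,\eta],\zeta]=0$ by Jacobi in $\g$, while the $\lambda^1$ coefficient is precisely $-\sum_{\mathrm{cyc}} A([\xi,\eta],\zeta)$. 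So Jacobi for $P_\lambda$ on linear functions reduces to $\diff A(\xi,\eta,\zeta)=0$.

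The main (and essentially only) subtlety is the step that checking the Jacobi identity for $P_\lambda$ on linear functions is enough to guarantee it on all smooth functions; this is a standard fact because the Schouten bracket of two bivectors is a trivector, so its vanishing is a pointwise tensorial condition and it is enough to evaluate it on a local frame of exact one-forms, which on $\g^*$ is provided by the differentials of the linear functions $\xi \in \g$. Once this is invoked, the equivalence of compatibility with \eqref{cocycleCond} is immediate in both directions.
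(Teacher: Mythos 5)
Your proof is correct. Note that the paper offers no proof of this proposition at all (it is stated as ``well known''), so there is nothing to diverge from: your argument --- expanding the Jacobiator of $P_\lambda = P_0 - \lambda P_A$ on the linear functions $\xi,\eta,\zeta \in \g \subset \Cont^\infty(\g^*)$, observing that the $\lambda^0$ part vanishes by the Jacobi identity in $\g$ while the $\lambda^1$ part is exactly $-\diff A(\xi,\eta,\zeta)$, and then invoking the tensorial (trivector) nature of the Jacobiator to reduce the check on all smooth functions to the check on linear coordinates --- is precisely the standard argument one would supply, and it correctly covers both directions of the equivalence.
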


\subsection{Linearization of a Poisson pencil.}
Let $P$ be a Poisson bracket. It is well-known that the linear part of $P$ at a point $x$ defines a natural Lie algebra structure on $\Ker P(x)$. This Lie algebra is called the linearization of $P$ at $x$. Now consider a Poisson pencil $\Pi = \{ P_{\lambda}\}$ and fix a point $x$. Denote by $\g_{\lambda}(x)$ the linearization of $P_\lambda$ at the point $x$.\par
It turns out that apart from the Lie algebra structure $\g_{\lambda}$ carries one more additional structure.
\begin{statement}
\begin{enumerate}
	\item For any $\alpha$ and $\beta$  the restrictions of $P_{\alpha}(x), P_{\beta}(x)$ on $\g_{\lambda}(x)$ coincide up to a multiplicative constant.
	\item The $2$-form $P_{\alpha}|_{\g_{\lambda}}$ is a $2$-cocycle on $\g_{\lambda}$.
\end{enumerate}
\end{statement}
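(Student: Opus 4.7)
The plan is to derive both statements from a single structural fact: $P_\lambda$ and $P_\infty$ are both members of the pencil, hence mutually compatible, and the compatibility identity collapses dramatically when evaluated at $x$ with arguments in $\g_\lambda(x) = \Ker P_\lambda(x)$. I treat the case $\lambda \neq \infty$; the case $\lambda = \infty$ follows by swapping the roles of $P_0$ and $P_\infty$.

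For part (1), I will write $P_\mu = P_\lambda + (\lambda - \mu)P_\infty$. By skew-symmetry and the definition of the kernel, the form $P_\lambda(x)$ restricts to zero on $\g_\lambda(x)$, so
\begin{align*}
P_\mu(x)|_{\g_\lambda(x)} = (\lambda - \mu)\,P_\infty(x)|_{\g_\lambda(x)},
\end{align*}
and any two such restrictions differ by a multiplicative constant. In view of this, to prove (2) it suffices to check the cocycle condition for the single form $A := P_\infty(x)|_{\g_\lambda(x)}$.

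For part (2), I take $\alpha, \beta, \gamma \in \g_\lambda(x)$ and smooth functions $f, g, h$ near $x$ with $\mathrm{d}_x f = \alpha$, $\mathrm{d}_x g = \beta$, $\mathrm{d}_x h = \gamma$. Extracting the coefficient of $t$ in the Jacobi identity for $\{\cdot,\cdot\}_\lambda + t\{\cdot,\cdot\}_\infty$ produces the mixed Jacobi identity
\begin{align*}
\sum_{\text{cyclic}}\Bigl( \{\{f,g\}_\lambda, h\}_\infty + \{\{f,g\}_\infty, h\}_\lambda\Bigr) = 0.
\end{align*}
Evaluating this at $x$: the three summands of the form $\{\{f,g\}_\infty, h\}_\lambda(x) = P_\lambda(x)(\mathrm{d}_x\{f,g\}_\infty, \gamma)$ and cyclic permutations vanish because $\alpha, \beta, \gamma \in \Ker P_\lambda(x)$. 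In each of the three remaining summands, $\{f,g\}_\lambda$ vanishes at $x$ (since $P_\lambda(x)(\alpha,\beta) = 0$), and by the standard linearization construction $\mathrm{d}_x\{f,g\}_\lambda$ represents the Lie bracket $[\alpha,\beta]$ of $\g_\lambda(x)$; hence $\{\{f,g\}_\lambda, h\}_\infty(x) = A([\alpha,\beta], \gamma)$. The mixed identity at $x$ therefore reads $A([\alpha,\beta],\gamma) + A([\beta,\gamma],\alpha) + A([\gamma,\alpha],\beta) = 0$, which is precisely the $2$-cocycle condition (\ref{cocycleCond}) for $A$ on $\g_\lambda(x)$.

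The main obstacle will be the bookkeeping at this evaluation step: recognizing that, of the six summands of the mixed Jacobi, the three with outer bracket $\{\cdot,\cdot\}_\lambda$ die against the kernel condition, while the three with inner bracket $\{\cdot,\cdot\}_\lambda$ assemble into the cocycle via the identification $\mathrm{d}_x\{f,g\}_\lambda = [\alpha,\beta]$. Once this is clear, no further computation remains.
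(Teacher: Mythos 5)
Your proof is correct. Note that the paper itself states this proposition without proof (it belongs to the linearization construction taken as known, following the cited work where linearization of a Poisson pencil is introduced), so there is no in-paper argument to compare against; your route is the standard one and is complete: part (1) follows from the pencil identity $P_\mu = P_\lambda + (\lambda-\mu)P_\infty$ together with $P_\lambda(x)|_{\Ker P_\lambda(x)} = 0$, and part (2) follows by extracting the coefficient of $t$ in the Jacobi identity for $P_\lambda + tP_\infty = P_{\lambda - t}$, evaluating at $x$ on covectors $\alpha,\beta,\gamma \in \Ker P_\lambda(x)$, and using $\diff_x\{f,g\}_\lambda = [\alpha,\beta]$. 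One cosmetic remark: the observation that $\{f,g\}_\lambda(x)=0$ is not actually needed, since $\{\{f,g\}_\lambda,h\}_\infty(x)$ depends only on $\diff_x\{f,g\}_\lambda$ and $\gamma$; what does matter is that $[\alpha,\beta]$ again lies in $\Ker P_\lambda(x)$ so that $A([\alpha,\beta],\gamma)$ is evaluated within $\g_\lambda(x)$, which you correctly invoke from the standard linearization construction. Your reduction of the general $P_\alpha|_{\g_\lambda}$ to the single form $A = P_\infty(x)|_{\g_\lambda}$ via part (1) and linearity of the cocycle condition, and the swap of $P_0$ and $P_\infty$ for $\lambda=\infty$, are both fine.
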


Consequently, $P_{\alpha}|_{\g_{\lambda}}$ defines a linear Poisson pencil on $\g_{\lambda}^{*}$. Since $P_{\alpha}|_{\g_{\lambda}}$ is defined up to a multiplicative constant, the pencil is well-defined. Denote this pencil by $\diff_{\lambda} \Pi(x)$.
\begin{definition}
	The pencil $\diff_{\lambda} \Pi(x)$ is called the \textit{$\lambda$-linearization} of the pencil $\Pi$ at $x$.
\end{definition}
The linearization of a Poisson pencil at a given point is, therefore, not a single pencil, but a whole ``curve'' of linear Poisson pencils parametrized by $\lambda \in \CP$. However, if $\rank \Pi(x) = \rank \Pi$, then  it is easy to see that $\diff_{\lambda} \Pi(x)$ is non-trivial only for $\lambda \in \Lambda_\Pi(x)$.

\begin{example}
	Consider the pencil from Example \ref{argShiftMethod}. The algebra $\g_\lambda(x)$ in this case is simply the stabilizer of $x-\lambda a$. The second form $P_{\alpha}|_{\g_{\lambda}}$ is given on the stabilizer by the formula $a|_{g_\lambda}( [\xi, \eta] )$. Thus, the $\lambda$-linearization is  the ``restriction'' of the initial pencil to the stabilizer of $x-\lambda a$. If $\lambda$ is not in the spectrum, then this stabilizer is abelian, and the linearization is trivial. 
\end{example}
\begin{remark}
Note that it is natural to expect that a ``linearization'' of an object defined on a manifold $M$ is an object defined on the tangent space $\T_xM$. For a $\lambda$-linearization of a Poisson pencil this is not so: it is defined on $(\Ker P_\lambda(x))^*$. However, the natural inclusion map $\Ker P_\lambda(x) \to \T^*_xM$ induces an isomorphism
\begin{align}
	\T_xM / \T_xO_\lambda(x) \simeq (\Ker P_\lambda(x))^*,
\end{align}
where $O_\lambda(x)$ is the symplectic leaf of $P_\lambda$ passing through $x$. Thus, $\diff_{\lambda} \Pi(x)$  can be considered as a Poisson pencil on the quotient $\T_xM / \T_xO_\lambda(x)$.
\end{remark}

\subsection{Compact linear pencils.}
Let $A$ be a $2$-cocycle on a Lie algebra $\g$.
For an arbitrary element $\nu \in \Ker A$ define the bilinear form $A^\nu(\xi, \eta) = A([\nu, \xi], \eta)$. The cocycle identity (\ref{cocycleCond}) implies that this form is symmetric. Furthermore, $\Ker A^\nu \supset \Ker A$, therefore $A^\nu$ is a well-defined symmetric form on $\g / \Ker A$.

 \begin{definition}
 A linear pencil $\Pi(\g, A)$ is \textit{compact} if there exists $\nu \in \zenter(\Ker A)$ such that $A^{\nu}$ is positive-definite on $\g / \Ker A$. 

\end{definition}
\begin{remark}
$\zenter$ stands for the center of a Lie algebra.
\end{remark}

\begin{example}
	Any linear pencil on a compact semisimple Lie algebra is compact.
	 Indeed, let $\g$ be a compact semisimple Lie algebra. Since $\Hom^2(\g) = 0$, any cocycle $A$ on $\g$ has the form $A(\xi, \eta) = \langle a,[\xi, \eta] \rangle$, where $\langle\, , \rangle$ is the Killing form. It is easy to see that for $\nu = a$ the form $A^\nu$ is positive-definite on $\g/\Ker A$.
\end{example}
\begin{example}
	Let $\g =  \sL(2,\R)$. Again, any cocycle on $\g$ has the form  $A(\xi, \eta) = \langle a,[\xi, \eta] \rangle$. Suppose that $a \neq 0$. Then it is easy to see that $\Pi(\g,A)$ is compact if and only if the Killing form is negative on $a$. A suitable choice of $\nu$ is $\nu = -a$.
\end{example}

\begin{example}
	Let $\g = \mathrm{Vect}(\mathrm{S}^1)$ be a Lie algebra of vector fields on a circle and $A$ be the Gelfand-Fuks cocycle (see \cite{AH}):
	\begin{align}
	A(\phi, \psi) = \int\limits_0^{2\pi} \phi\psi''' \diff x.
	\end{align}
	Then the pencil $\Pi(\g, A)$ is compact. Indeed, if we choose $\nu = 1$, then 
	\begin{align}
		A^\nu(\phi, \phi) =  \int\limits_0^{2\pi} (\phi'')^2 \diff x.
	\end{align}
\end{example}

\subsection{Geometric meaning of compactness condition.}
\begin{statement}
	Suppose that a system $v$ is bi-Hamiltonian with respect to a compact linear pencil. Then the origin is a stable equilibrium of $v$.
\end{statement}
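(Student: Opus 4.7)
The plan is to apply the Arnold Energy-Casimir method by constructing a conserved function $F$ on $\g^{*}$ whose Hessian at the origin is definite.

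First I would exploit bi-Hamiltonicity with respect to $P_{\infty}=A$: writing $v=A\cdot\diff H$ for some $H$, we see $v\in\mathrm{Im}(A)$ pointwise, so $v$ is tangent to the affine leaves of $A$ parallel to $\mathrm{Im}(A)=\mathrm{Ann}(\Ker A)$. Hence, picking a basis $\xi_{1},\dots,\xi_{k}$ of $\Ker A$, the linear functions $x\mapsto x(\xi_{i})$ are Casimirs of $A$ and therefore conserved by $v$; their squares sum to a conserved quadratic $G(x):=\tfrac12\sum_{i}x(\xi_{i})^{2}$ whose Hessian is positive definite transverse to $\mathrm{Im}(A)$.

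Next I would use bi-Hamiltonicity with $P_{0}$: write $v=P_{0}\diff H_{0}$. Formal expansion $H_{\lambda}=\sum\lambda^{i}H_{i}$ combined with the requirement $v(0)=0$ for every $\lambda\neq 0$ forces $\beta_{i}:=\diff H_{i}(0)\in\Ker A$; the further condition that $\ad^{*}_{\beta_{i}}$ have image in $\mathrm{Im}(A)$ places $\beta_{i}\in\centrum(\Ker A)$. Each $H_{i}$ is independently conserved by $v$, so $\tilde H_{i}:=H_{i}-H_{i}(0)-(\diff H_{i}(0))(x)$ is a conserved function with $\tilde H_{i}(0)=\diff\tilde H_{i}(0)=0$ (the linear correction is an $A$-Casimir since $\beta_{i}\in\Ker A$). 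The Lenard relations $A\cdot\mathrm{Hess}_{0}H_{i}=\ad^{*}_{\beta_{i+1}}$ together with the cocycle identity identify the pull-back of $\mathrm{Hess}_{0}\tilde H_{i}|_{\mathrm{Im}(A)}$ along the isomorphism $A:\g/\Ker A\to\mathrm{Im}(A)$ with $-A^{\beta_{i+1}}$ on $\g/\Ker A$.

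The candidate Lyapunov function is $F:=\sum c_{i}\tilde H_{i}+G$ for suitable $c_{i}$: it is conserved, vanishes with its differential at the origin, and its Hessian on $\mathrm{Im}(A)$ equals $-A^{\sum c_{i}\beta_{i+1}}$. Compactness of the pencil supplies some $\nu\in\centrum(\Ker A)$ with $A^{\nu}$ positive definite, so the open cone $\{\gamma\in\centrum(\Ker A):A^{\gamma}\text{ is definite}\}$ is non-empty; choosing $c_{i}$ so that $\sum c_{i}\beta_{i+1}$ lies in this cone makes $\mathrm{Hess}_{0}F|_{\mathrm{Im}(A)}$ definite, and combined with the transverse definiteness from $G$ the full Hessian is definite. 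Up to replacing $F$ by $-F$, $F$ is a genuine Lyapunov function near the origin, and classical Lyapunov stability concludes.

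The hardest step will be the coefficient selection: showing that an appropriate definite direction in $\centrum(\Ker A)$ lies in the span of the chain gradients $\beta_{i+1}$. Compactness guarantees only the existence of one good $\nu$, while the $\beta_{i}$ are dictated by $v$; the essential content is that bi-Hamiltonicity must produce enough independent conserved Hamiltonians for their chain-gradients to span enough of $\centrum(\Ker A)$ to meet the definite cone.
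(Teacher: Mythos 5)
The gap you flag in your last paragraph is real, and it is not a loose end but the crux: your scheme cannot close it. You build the Lyapunov candidate out of the Hamiltonians of $v$, so the only directions you can feed into the compactness form are the chain gradients $\beta_i=\diff H_i(0)$, and these are dictated by $v$; nothing in the hypotheses forces any combination $\sum c_i\beta_{i+1}$ into the definite cone. Already for $v\equiv 0$ with the natural choice $H_\lambda\equiv 0$ every $\beta_i$ vanishes and the construction yields nothing, and the Proposition asserts stability for \emph{every} bi-Hamiltonian $v$, including those whose Hamiltonians have differentials at the origin far from $\nu$. What the paper intends (its stated idea, carried out precisely in Lemma 5 of the proof of the stability theorem) is to build the quadratic part from integrals whose differential at the equilibrium can be \emph{prescribed}, namely Casimir functions of the brackets of the pencil: these are integrals of $v$ for any bi-Hamiltonian $v$, and one takes an integral $f$ from this Casimir family with $\diff f(0)=\nu$, the compactness witness itself. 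Your own linearization computation then finishes the argument: differentiating $(P_0-\lambda A)\,\diff f\equiv 0$ at the origin gives $A\,\diff^2 f(0)=\pm\lambda^{-1}\ad^{*}_{\nu}$, so the pull-back of $\diff^2 f(0)$ along $A\colon \g/\Ker A\to \mathrm{Im}\,A$ is $\pm\lambda^{-1}A^{\nu}$, definite by hypothesis, and no coefficient-selection problem arises. Equivalently: each $H_\lambda$ is defined only modulo Casimirs of $P_\lambda$, and it is exactly this Casimir freedom --- the family $\F$ of the paper --- that your argument never uses; without it the approach cannot be completed.

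Two secondary points. The expansion $H_\lambda=\sum\lambda^i H_i$ with Lenard relations is not granted by the paper's definition of bi-Hamiltonian (Hamiltonian with respect to each bracket separately), and it is also unnecessary: for fixed $\lambda$ the conserved function $H_\infty+\lambda H_\lambda$, corrected by a linear $A$-Casimir, already has Hessian $\pm A^{\beta_\lambda}$ on $\mathrm{Im}\,A$, so the chain buys nothing beyond single directions $\beta_\lambda$ (likewise, your claim that $\ad^{*}_{\beta_i}$ has image in $\mathrm{Im}\,A$ needs the comparison with the $P_\infty$-Hamiltonian representation of $v$, which you do not spell out). Finally, in the assembly, adding $G$ with coefficient one does not make the full Hessian definite, since the $\tilde H$-part is uncontrolled off $\mathrm{Im}\,A$; one needs $\pm\sum c_i\tilde H_i + C\,G$ with $C$ large, and the sign flip must be applied to the $\tilde H$-part only, because $-G$ destroys the transverse definiteness. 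These items are fixable; the missing realization of the definite direction $\nu$ as the differential of an integral is not fixable within your scheme.
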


The idea of the proof is that the form $A^\nu > 0$ can be used to construct a positive-definite integral of $v$.\par
A similar statement is true for nonlinear pencils, see Theorem \ref{stabThm}. In this case stability can be studied by checking the compactness of the \textbf{linearizations}. In contrast to the classical linearization procedure, which can only prove linearized stability, bi-Hamiltonian linearization (defined above) proves \textbf{nonlinear} stability.

\subsection{Diagonalizability condition.}

\begin{definition}\label{RegEq}
	The pencil $\Pi$ is called diagonalizable at $x$ if
	\begin{align}
	\dim \Ker \left( P_\alpha(x)|_{P_\lambda(x)}\right) = \corank \Pi(x) \mbox{ for all } \lambda \in \Lambda_\Pi(x), \alpha \neq \lambda.
	\end{align}
\end{definition}
\begin{remark}\label{JK}
The Jordan-Kronecker theorem (see \cite{GZ2}) claims that two skew-symmetric forms on a vector space can be simultaneously brought to a certain block-diagonal form. This form contains blocks of two types: Jordan blocks and Kronecker blocks. 
The diagonalizability condition means that all Jordan blocks for $P_0(x), P_\infty(x)$ have size $2 \times 2$. 
\end{remark}
	\begin{example}
	Let $\Pi$ be the pencil from Example \ref{argShiftMethod}. Suppose that $a$ is regular. Then the pencil is diagonalizable at $x$ if for each $\lambda \in \Lambda_\Pi(x)$ the following two conditions hold:
	\begin{enumerate}
		\item The index\footnote{Recall that the index of a Lie algebra $\g$ can be defined as the corank of the corresponding Lie-Poisson structure, or, equivalently, as the dimension of the stabilizer of a regular element $a \in \g^*$.} of the stabilizer of $x - \lambda a$ equals the index of $\g$.
		\item The restriction of $a$ to the stabilizer of $x - \lambda a$ is a regular element.
	\end{enumerate}
	\end{example}

\subsection{Regularity condition.}
Let $v$ be a system which is bi-Hamiltonian with respect to $\Pi$, $v(x) = 0$.
Suppose that $\rank \Pi(x) = \rank \Pi$.
\begin{definition}\label{regEq}
Say that $x$ is \textit{regular} if the following condition holds:
	\begin{align}
	\Ker P_\alpha(x) = \Ker P_\beta(x) \mbox{ for all } \alpha, \beta \notin \Lambda(x).
	\end{align}
\end{definition}

\begin{theorem}[A.V.Bolsinov, A.A.Oshemkov \cite{biham}]
	Let $v$ be a system which is bi-Hamiltonian with respect to $\Pi$, $v(x) = 0$. Suppose that $\rank \Pi(x) = \rank \Pi$. Then, if $x$ is not regular, we can find an integral $f$ of $v$ and $\alpha \in \R$ such that $P_\alpha \diff f(x) \neq 0$.
\end{theorem}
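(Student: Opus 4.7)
The plan is to use the failure of regularity directly: pull a covector out of one kernel $\Ker P_\alpha(x)$ that lies outside another kernel $\Ker P_\beta(x)$, realise it as $\diff f(x)$ for a local Casimir $f$ of $P_\alpha$, and observe that any such Casimir is automatically a first integral of $v$. Concretely, the non-regularity of $x$ supplies $\alpha,\beta\in\R\setminus\Lambda_\Pi(x)$ together with a covector $\xi\in\Ker P_\alpha(x)\setminus\Ker P_\beta(x)$.

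I would next observe that Casimirs of any bracket in the pencil are first integrals of $v$. Indeed, $v$ being bi-Hamiltonian with respect to $\Pi$ means that for every $\alpha$ one can find, locally near $x$, a Hamiltonian $H_\alpha$ with $v=P_\alpha\,\diff H_\alpha$; a Casimir $f$ of $P_\alpha$ then satisfies $v(f)=\{f,H_\alpha\}_{P_\alpha}=0$. Hence the theorem reduces to exhibiting a local Casimir $f$ of $P_\alpha$ with $\diff f(x)=\xi$.

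The existence of such a Casimir is the main technical point, and both hypotheses of the theorem feed into it. Since $\alpha\notin\Lambda_\Pi(x)$, one has $\rank P_\alpha(x)=\rank\Pi(x)$; the assumption $\rank\Pi(x)=\rank\Pi$ then makes this equal to the generic rank of $P_\alpha$ on $M$, and lower semicontinuity of rank forces $\rank P_\alpha$ to be locally constant at $x$. Thus $x$ is a regular point of the single Poisson structure $P_\alpha$, and Weinstein's splitting theorem supplies, on a neighbourhood of $x$, a maximal collection of functionally independent local Casimirs of $P_\alpha$ whose differentials at $x$ span $\Ker P_\alpha(x)$; a suitable linear combination realises $\xi$ as $\diff f(x)$.

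For this $f$ one has $P_\beta\,\diff f(x)=P_\beta(x)\xi\neq 0$ by the choice of $\xi$, so the pair $(f,\beta)$ satisfies the conclusion. The delicate ingredient is really the construction of the prescribed Casimir in the third paragraph — once Weinstein is applied, the rest is bookkeeping; the pencil-level rank assumption is exactly what is needed to run Weinstein for the individual bracket $P_\alpha$.
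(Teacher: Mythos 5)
Your argument is sound, and since the paper itself gives no proof of this theorem (it is quoted from \cite{biham}), there is nothing internal to compare it against; your route --- locally constant rank of $P_\alpha$ at $x$ because $\alpha \notin \Lambda_\Pi(x)$ and $\rank \Pi(x) = \rank \Pi$, then Darboux--Weinstein to realize any $\xi \in \Ker P_\alpha(x)$ as $\diff f(x)$ for a local Casimir $f$, which is automatically an integral of the bi-Hamiltonian field $v$ --- is exactly the standard mechanism behind this statement. One point deserves tightening: you assert that non-regularity supplies two \emph{finite} parameters $\alpha, \beta \in \R \setminus \Lambda_\Pi(x)$ with distinct kernels, but the regularity condition in Definition~\ref{regEq} ranges over the whole pencil, parametrized by $\RP$, so a priori the mismatch could involve $P_\infty$. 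This is harmless, but either justify the claim (e.g.\ via the Jordan--Kronecker decomposition: if all finite non-spectral kernels coincide, every Kronecker block is trivial and the kernel at $\infty$ coincides as well) or, more simply, note that the roles of the two brackets are symmetric: since the two kernels have equal dimension, you may take the covector in $\Ker P_\infty(x) \setminus \Ker P_\gamma(x)$, let $f$ be a local Casimir of $P_\infty$, and conclude $P_\gamma \diff f(x) \neq 0$ with the finite parameter $\gamma$, which is all the theorem asks for. With that remark added, the proof is complete.
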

Consequently, if $x$ is not regular, the whole trajectory of $P_\alpha\diff f$ passing through $x$ consists of equilibrium points of $v$. In this situation it can be shown that, provided the system is non-resonant\footnote{Recall that an integrable system is called non-resonant if its trajectories are dense on almost all Liouville tori. See \cite{intsys}.}, $x$ cannot be Lyapunov stable. Therefore, it only makes sense to study regular equilibria for stability.

\section{Stability theorem}

\begin{theorem}[Stability theorem]
\label{stabThm}
Suppose that $\Pi$ is a Poisson pencil on a finite-dimensional manifold, $v$ is bi-Hamiltonian with respect to $\Pi$. Let $x$ be an equilibrium of $v$. Assume that  
 \begin{enumerate}
	\item $\rank \Pi(x) = \rank \Pi$.
	\item The equilibrium $x$ is regular.
%	\item The spectrum of $\Pi$ at $x$ is real\,\footnote{Formally, this condition follows from condition 5, because a linear pencil on a complex Lie algebra cannot be compact.}: $\Lambda_\Pi(x) \subset \RP$.
	\item The pencil $\Pi$ is diagonalizable at $x$.
	\item For each $\lambda \in \Lambda_\Pi(x)$ the $\lambda$-linearization $\diff_\lambda \Pi(x)$ is compact.
 \end{enumerate}
 Then $x$ is Lyapunov (nonlinearly) stable.
 \end{theorem}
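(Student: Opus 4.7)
The plan is to apply the extended Arnold method (Theorem~\ref{EAEC}). Fix some $\alpha\notin\Lambda_\Pi(x)$, let $O$ be the symplectic leaf of $P_\alpha$ through $x$, and look for an integral $F$ of $v$ satisfying $\diff F|_O(x)=0$ and $\diff^2 F|_O(x)>0$. Conditions~1--2 guarantee that $O$ is generic in the sense required by the theorem.

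The reservoir of integrals is the collection of Casimirs of the various brackets $P_\mu$ of the pencil: since $v=P_\mu\,\diff H_\mu$, any $f$ with $P_\mu\,\diff f=0$ satisfies $v(f)=\{H_\mu,f\}_\mu=0$. I plan to assemble $F$ from Casimirs of $P_\alpha$ together with one integral $f_\lambda$ for each $\lambda\in\Lambda_\Pi(x)$, arranging the differentials so that $\diff F(x)\in K:=\Ker P_\alpha(x)$ and hence $\diff F|_O(x)=0$.

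The positivity of the Hessian is driven by the compactness hypothesis. For each $\lambda\in\Lambda_\Pi(x)$, compactness of $\diff_\lambda\Pi(x)$ furnishes $\nu_\lambda\in\zenter(\Ker A_\lambda)$ with the symmetric form $A_\lambda^{\nu_\lambda}$ positive-definite on $\g_\lambda(x)/\Ker A_\lambda$. The key local construction is to realise $\nu_\lambda\in\Ker P_\lambda(x)\subset\T^*_xM$ as $\diff f_\lambda(x)$ for an integral $f_\lambda$ of $v$, with the centrality condition on $\nu_\lambda$ serving as the integrability obstruction that permits the construction despite the rank drop of $P_\lambda$ at $x$. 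Under the identification $\T_xM/\T_xO_\lambda(x)\simeq(\Ker P_\lambda(x))^*$ from Section~3, the Hessian $\diff^2 f_\lambda(x)$ restricted to the transversal to $O_\lambda(x)$ becomes the form $A_\lambda^{\nu_\lambda}$, hence positive-definite modulo $\Ker A_\lambda$.

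The main obstacle is assembling these local pieces into a single $F$ with positive-definite Hessian on all of $\T_xO$. By Remark~\ref{JK}, diagonalizability means the Jordan-Kronecker normal form of $(P_\alpha(x),P_\infty(x))$ contains only Kronecker blocks and $2\times 2$ Jordan blocks at the spectral points. The decisive step is to show that this normal form induces a splitting of $\T_xM/\T_xO$ into a Kronecker summand (identified with $K^*$) and one Jordan summand per $\lambda\in\Lambda_\Pi(x)$, with respect to which the Hessian $\diff^2 f_\lambda(x)$ is block-diagonal---equal to $A_\lambda^{\nu_\lambda}$ on the $\lambda$-summand and zero on the others. Granting this decoupling, $F=\sum_{\lambda}f_\lambda$ supplemented by Casimirs of $P_\alpha$ that contribute positivity along the Kronecker directions has positive-definite Hessian on $\T_xO$, and Theorem~\ref{EAEC} yields Lyapunov stability. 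Regularity is what pins down the common kernel $K$ and the Kronecker summand cleanly, while the $2\times 2$ restriction from diagonalizability is what prevents distinct spectral values from entangling.
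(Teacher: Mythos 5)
Your overall strategy---apply Theorem \ref{EAEC} to a combination of Casimirs of the brackets of the pencil, using compactness of each $\lambda$-linearization to produce positivity in the $\lambda$-directions---is the same as the paper's, but there is a genuine gap exactly where you write ``granting this decoupling''. It is not true that an integral $f_\lambda$ chosen so that $\diff f_\lambda(x)=\nu_\lambda$ automatically has a Hessian that is block-diagonal with respect to the spectral splitting, equal to $A_\lambda^{\nu_\lambda}$ on the $\lambda$-block and \emph{zero} on the others; the Jordan--Kronecker normal form constrains the pair of Poisson tensors at $x$, not the second derivatives of the particular Casimirs you picked. What is automatic (Lemma \ref{lemma4}) is only the vanishing of the cross terms: for any $f$ in the span $\F$ of Casimirs, the recursion operator $R=P_\infty^{-1}P_0$ is symmetric with respect to the form $Q_f(\xi,\eta)=\diff^2f(P_\infty\xi,P_\infty\eta)$, so the eigenspaces $V_\lambda=\Ker P_\lambda(x)/K$ are $Q_f$-orthogonal. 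But the diagonal blocks $Q_{f_\lambda}|_{V_\mu}$ for $\mu\neq\lambda$ need not vanish, nor be positive semidefinite, so the naive sum $\sum_\lambda f_\lambda$ can fail to have a positive Hessian on the leaf. The paper's substitute for your asserted decoupling is the recursion-invariance Lemma \ref{lemma6}: with $p_\lambda$ the interpolation polynomial for which $p_\lambda(R)$ is the projector onto $V_\lambda$, one finds $\wave f_\lambda\in\F$ with $Q_{\wave f_\lambda}(\xi,\xi)=Q_{f_\lambda}(p_\lambda(R)\xi,\xi)$, which kills the unwanted diagonal blocks; and this step is not formal, since $f_\lambda$ minus a Casimir of $P_\infty$ need not lie in $\F$, forcing the limit argument with Casimirs $f_\alpha\to f_\infty$ and the finite-dimensionality of $\{Q_g : g\in\F\}$. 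Your proposal contains no mechanism playing this role.

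Two smaller inaccuracies point the same way. Positivity must be established on $\T_xO$, the tangent to the leaf of a regular bracket; under the regularity hypothesis the Kronecker part reduces to the common kernel $K$, so $\T_xO$ is exhausted by the images of the spectral subspaces $V_\lambda$ and there are no residual ``Kronecker directions'' for extra Casimirs of $P_\alpha$ to handle---the entire difficulty sits in the Jordan part. And the form $A_\lambda^{\nu_\lambda}$ enters the paper's proof not as the Hessian restricted to a transversal to $O_\lambda(x)$, but through the identity $Q_{f_\lambda}(\xi,\xi)=P_\infty([\nu_\lambda,\xi],\xi)$ for $\xi\in V_\lambda$ (Lemma \ref{lemma5}), i.e.\ on a subspace of leaf directions; your identification via $\T_xM/\T_xO_\lambda(x)\simeq(\Ker P_\lambda(x))^*$ conflates the space on which the linearization is defined with the space on which positivity is actually needed.
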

The proof is given in Section \ref{proof}.
\begin{remark}\label{realSpectrum}
Condition 4 implies that the spectrum of $\Pi$ at $x$ is real, since a pencil on a complex Lie algebra cannot be compact.
\end{remark}
Theorem \ref{stabThm} is a bi-Hamiltonian reformulation of Theorem \ref{EAEC} in the following sense. Let a system $v$ be bi-Hamiltonian with respect to a pencil $\Pi$. Then the Casimir functions of all brackets of the pencil are first integrals of $v$. These first integrals are known to be in involution (see \cite{Reiman}). Consider the family $\F$ generated by all these first integrals. Then the following is true:
\textit{if the  first condition of Theorem \ref{stabThm} is satisfied, then the subsequent conditions are equivalent to the existence of  $f \in \F$ satisfying the conditions of Theorem \ref{EAEC}}.\par
If $\F$ happens to exhaust all the first integrals of $v$, then Theorems \ref{EAEC} and \ref{stabThm} are equivalent (for generic points satisfying $\rank \Pi(x) = \rank \Pi$). This should be expected if the first integrals belonging to $\F$ are sufficient for complete Liouville integrability of $v$.
	\begin{definition}\label{KroneckerDef}
		A pencil is called \textit{Kronecker} if its spectrum is empty almost everywhere.
	\end{definition}
	\begin{remark}
		This condition means that the Jordan-Kronecker normal form (see Remark \ref{JK}) for $P_0(x), P_\infty(x)$ contains only Kronecker blocks for almost all $x$.
	\end{remark}
%	Assume that $v$ is bi-Hamiltonian with respect to a pencil $\Pi$. Then the Casimir functions of all brackets of the pencils are integrals of $v$. These integrals are known to be in involution (see \cite{Reiman}).
	\begin{theorem}[A.V.Bolsinov \cite{Bolsinov}]
	Let $v$ be bi-Hamiltonian with respect to a pencil $\Pi$ and $\F$ be the family of first integrals of $v$ described above. Then the first integrals belonging to $\F$ are sufficient for complete Liouville integrability of $v$ if and only if $\Pi$ is Kronecker.
	\end{theorem}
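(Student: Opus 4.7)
The plan is to reduce the equivalence to a pointwise dimension count based on the Jordan--Kronecker normal form of the pair $(P_0(x),P_\infty(x))$ at a generic point $x$. Since $v$ is bi-Hamiltonian, its generic symplectic leaves have dimension $\rank \Pi$, so complete Liouville integrability of $v$ by integrals from $\F$ is equivalent to the assertion that at a generic point $x$ the differentials of functions in $\F$ span a subspace $V(x) \subset \T_x^*M$ of dimension $\dim M - \tfrac{1}{2}\rank\Pi$. Thus the theorem reduces to computing $\dim V(x)$ from the Jordan--Kronecker type of the pencil.

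Next I would use that for any Casimir $f$ of $P_\lambda$ one has $\diff f(x) \in \Ker P_\lambda(x)$, so $V(x)\subset \sum_{\lambda\in\CP}\Ker P_\lambda(x)$, with the further restriction that only $\lambda$'s at which $P_\lambda$ has Casimirs in a neighborhood of $x$ contribute. Decomposing $(\T_xM,P_0(x),P_\infty(x))$ into Jordan and Kronecker blocks (Remark \ref{JK}), I would analyze the contributions separately. A Kronecker block of size $2k_i+1$ admits a single polynomial Casimir of degree $k_i$ in $\lambda$, whose coefficients (elements of $\F$ via the expansion in powers of $\lambda$) span $k_i+1$ independent cotangent directions in that block. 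A Jordan block of size $2m$ paired at eigenvalue $\lambda_0$ is full-rank for $P_\lambda$ with $\lambda\neq\lambda_0$, so only $P_{\lambda_0}$ can provide Casimirs in it — but $\lambda_0=\lambda_0(x)$ varies with $x$, hence no function defined as a Casimir of a \emph{fixed} bracket of the pencil has nonzero differential component in a Jordan block at a generic point. Consequently the Jordan blocks contribute $0$ to $V(x)$.

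Summing the contributions, $\dim V(x)=\sum_{\text{Kron}}(k_i+1)$, which in the purely Kronecker case equals $\tfrac{1}{2}\rank\Pi+\corank\Pi=\dim M-\tfrac{1}{2}\rank\Pi$ as required; in the presence of any Jordan block of size $2m$ the target grows by $m$ while the achieved dimension does not grow at all, giving a strict deficit and ruling out complete integrability from $\F$. This establishes both implications. The delicate step I expect to be the main obstacle is the Jordan-block part of the argument: one must justify rigorously that Casimirs of $P_{\lambda_0(x)}$ do not descend to smooth functions on $M$ whose differentials sit in the Jordan component, which requires carefully unpacking how the spectrum moves with $x$ and how the "generating polynomial family" for $\F$ is defined (going back to the original construction in \cite{Bolsinov}); the Kronecker-block count and the overall dimension bookkeeping are then straightforward.
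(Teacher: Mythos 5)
The paper does not actually prove this theorem: it is quoted as a known result from Bolsinov's work, so there is no in-paper proof to compare against. Your route — Jordan--Kronecker normal form at a generic point plus a dimension count of the span of $\diff f(x)$, $f \in \F$ — is precisely the standard argument behind this completeness criterion, and your bookkeeping is right: Kronecker blocks of size $2k_i+1$ contribute $k_i+1$, $\sum(k_i+1)=\tfrac12\rank\Pi+\corank\Pi=\dim M-\tfrac12\rank\Pi$ in the Kronecker case, while each Jordan block of size $2m$ raises the target by $m$ and contributes nothing, giving a strict deficit.

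The genuine gap is exactly where you suspected it, but your proposed justification does not work. You exclude Jordan-block contributions by arguing that the eigenvalue $\lambda_0=\lambda_0(x)$ ``varies with $x$,'' so no Casimir of a fixed bracket can have a differential component there. The spectrum need not vary: it can be locally constant (for instance for linear pencils $\Pi(\g,A)$, or for a pair of constant compatible brackets, where $\Lambda_\Pi(x)$ is the same at every point), and then the singular bracket $P_{\lambda_0}$ has locally constant rank, possesses local Casimirs, and their differentials genuinely do have components inside the Jordan blocks — so your stated reason fails. The correct repair depends on how $\F$ is construed. If $\F$ is generated by Casimirs of the \emph{regular} brackets only (which is how the paper actually uses $\F$ in Section \ref{proof}, taking $\alpha>\alpha_0$ away from the spectrum), no motion-of-spectrum argument is needed at all: $\diff f(x)\in\Ker P_\lambda(x)$ with $\lambda\notin\Lambda_\Pi(x)$, and by the Jordan--Kronecker normal form such a kernel meets every Jordan block trivially (a regular $P_\lambda$ is nondegenerate on each Jordan block, including those with complex eigenvalues, after complexification). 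If instead one admits Casimirs of singular brackets into $\F$, the dimension count alone no longer settles the ``only if'' direction: such functions can supply the missing covectors inside Jordan blocks, and one must then show that the enlarged family fails to be in involution (Casimirs of one and the same singular bracket need not commute with respect to the other brackets of the pencil), which is an additional argument you have not supplied. With the first reading of $\F$, your proof closes up and coincides with the classical one; as written, the Jordan-block step is unsound.
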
	
	%Suppose that this condition is satisfied. If, additionally, $v$ is non-resonant, then any integral $f$ of $v$ is constant on all Liouville tori and, consequently, is functionally dependent with the functions belonging to $\F$. Thus, it is natural to expect that all integrals 
	So, Theorem \ref{stabThm} should be the most effective for Kronecker pencils.

 \section{Multidimensional rigid body}
 \subsection{Statement of the problem.}
 It is well known that a free asymmetric three-dimensional rigid body admits three stationary rotations\footnote{A rotation is called stationary if the axis of rotation is time independent.}. These are the rotations around three principal axes of inertia. The rotations around the long and the short axes are stable, while the rotation around the intermediate axis is unstable (see \cite{Arnold}). The problem is to obtain a multidimensional generalization of this fact, i.e. to study stationary rotations of a free multidimensional rigid body for stability.\par
 This problem has been studied by many people, see \cite{Oshemkov, Marshall, Spiegler, Ratiu, Casu2, Casu}. However, no general solution is known.

 \subsection{The Euler-Arnold equations and the bi-Hamiltonian structure.}
 The dynamics of the angular velocity matrix $\Omega$ of a free multidimensional rigid body is described by the Euler-Arnold equations (see \cite{Arnold})
  \begin{align}\label{MRB}
\dot \Omega J + J \dot \Omega = [J, \Omega^2], 
 \end{align}
 where $J$ is the mass tensor (see below).
 \begin{remark}
 The proof of integrability of (\ref{MRB}) belongs to S.V.Manakov\,\cite{Manakov}. For this reason the system (\ref{MRB}) is also known as the Manakov top.
 \end{remark}
 The following two observations allow the application of Theorem \ref{stabThm} to the problem of stability of stationary rotations:
 \begin{enumerate}
 \item Stationary rotations are just the equilibria of (\ref{MRB}).
 \item The system (\ref{MRB}) is bi-Hamiltonian (with respect to a Kronecker pencil), as it was observed by A.V.Bolsinov\,\cite{Bolsinov}.
The first Poisson structure (due to Arnold) is the standard Lie-Poisson structure on $\so(n)^*$. The second (due to Bolsinov) is also a Lie-Poisson structure, but for a non-standard commutator on $\so(n)$ given by $[X,Y] = XJ^2Y - YJ^2X$.
\end{enumerate}
 \subsection{Rotation of a multidimensional body.}
 First, consider how an $n$-dimensional body may rotate. At each moment of time $\R^n$ is decomposed into a sum of $m$ pairwise orthogonal two-dimensional planes $\Pi_1, \dots, \Pi_m$ and a space $\Pi_0$ of dimension $n-2m$ orthogonal to all these planes:
\begin{align}\label{decomp}
\R^n = \left(\bigoplus_{i=1}^{m} \Pi_i \right)\oplus \Pi_0.
\end{align}
 There is an independent rotation in each of the planes $\Pi_1, \dots, \Pi_m$, while $\Pi_0$ is fixed\footnote{Note that $\Pi_0$ may be zero in the even-dimensional case, which means that there are no fixed axes.}. In other words, a rotation of a multidimensional body can be represented as a superposition of ``elementary'' $2$-dimensional rotations. \par
A rotation is stationary if all the planes $\Pi_0, \dots, \Pi_m$ are time independent (this condition automatically implies that the velocities of the rotations are also constant). \par
Before studying stationary rotations for stability it is necessary to find these rotations. Recall that a rotation of a generic three-dimensional rigid body is stationary if and only if it is a rotation around one of the principal axes of inertia. In the multidimensional case the situation is slightly more complicated. If the planes $\Pi_0, \dots, \Pi_m$ are spanned by principal axes of inertia (such rotations are called in \cite{NRE} \textit{regular}), then the rotation is stationary. But the converse is not necessarily true (see \cite{NRE}). However, as it is shown in \cite{MBR}, the rotations which are not regular are always unstable. Therefore, it is only necessary to consider regular stationary rotations.
\subsection{Mass tensor of a rigid body.}
From the dynamical point of view a rigid body is characterized by its mass tensor $J$. The entries of this tensor are given by
\begin{align}
J_{ij} = \int (x_i - \widehat x_i)(x_j - \widehat x_j) \diff \mu,
\end{align}
where $\widehat x_i$ are the center of mass coordinates.\par
A body is called asymmetric if all the eigenvalues of $J$ are distinct.
\subsection{Parabolic diagram of a regular stationary rotation.}
Consider a regular stationary rotation. Then, by definition, the planes $\Pi_i$ entering (\ref{decomp}) are spanned by principal axes of inertia. For each plane $\Pi_i, i > 0$ let us denote by $\lambda_1(\Pi_i), \lambda_2(\Pi_i)$  the eigenvalues of the mass tensor $J$ corresponding to the principal axes of inertia which span $\Pi_i$. By $\omega(\Pi_i)$ denote the angular velocity of rotation in the plane $\Pi_i$.\par
Draw a coordinate plane. Mark squares of all eigenvalues of $J$ on the horizontal axis. For each $\Pi_i$ draw a parabola through $ \lambda_1(\Pi_i)^{2},  \lambda_2(\Pi_i)^{2}$ given by $y = \chi_{i}(x)$, where
		\begin{align}\label{chiFunc}
			\chi_{i}(x) = \frac{(x - \lambda_1(\Pi_i)^{2})(x - \lambda_{2}(\Pi_i)^{2})}{\omega(\Pi_{i})^{2}( \lambda_{1}(\Pi_i) +  \lambda_{2}(\Pi_i))^{2}}.
		\end{align}
		For all fixed principal axes draw vertical lines through the squares of corresponding eigenvalues of $J$.
\begin{definition}
		The obtained picture is called \textit{the parabolic diagram} of a regular stationary rotation.
	\end{definition}
	Figures \ref{pd1}, \ref{pd2} illustrate two examples of parabolic diagrams. \par
\begin{figure}[t]
\centerline{ \includegraphics[scale=0.4]{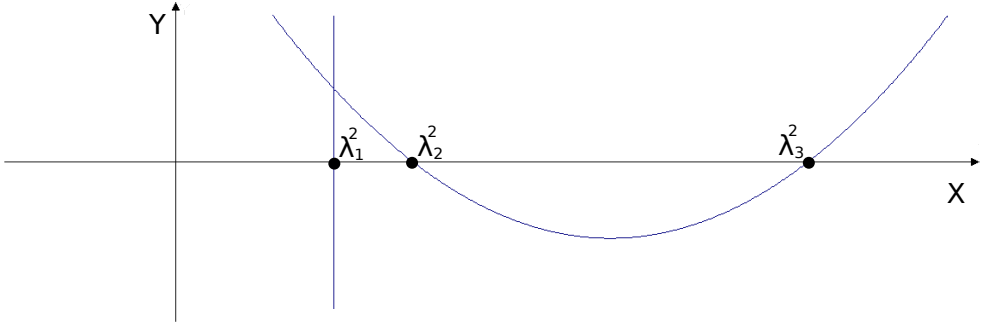}}
%%%\quad\includegraphics[scale=1]{6d.eps}}
\caption{Stable rotation of a three-dimensional body around the short principal axis of inertia. $\lambda_1 < \lambda_2 < \lambda_3$ are the eigenvalues of the mass tensor.}\label{pd1}
\end{figure}

\smallskip
\begin{definition}
\begin{enumerate}
	\item Two parabolas on a parabolic diagram are said to intersect at infinity if they have only one point of intersection (of multiplicity one) or no points of intersection (neither real, nor complex).
	\item Two parabolas on a parabolic diagram are said to be tangent at infinity if they have no points of intersection (real or complex). 
\end{enumerate}
\end{definition}
% \begin{definition}
%	Say that a parabolic diagram is \textit{generic} if all intersections on it are simple, i.e. it contains no multiple intersections and no points of tangency.
%\end{definition}
\begin{figure}[b]
\centerline{\includegraphics[scale=0.4]{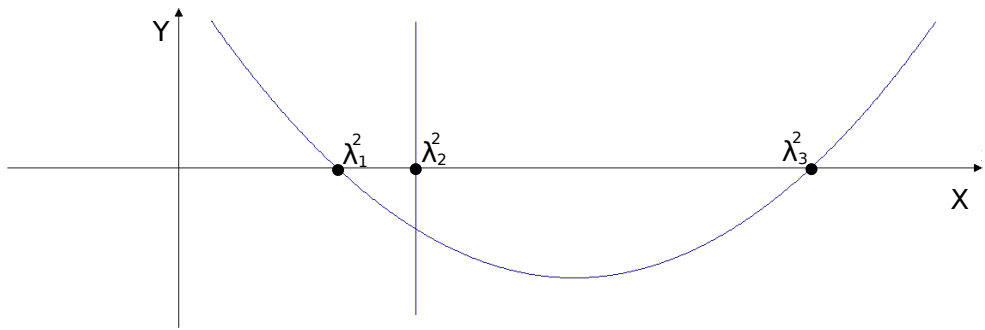}}
\caption{Unstable rotation of a three-dimensional body around the middle principal axis of inertia.}\label{pd2}
\end{figure}
\subsection{Stability theorems.}

Applying Theorem \ref{stabThm} we obtain the following result:

  \begin{theorem}\label{newStabThm}
 Consider a regular stationary rotation of an asymmetric multidimensional rigid body.  Assume that
 \begin{enumerate}
  \item All intersections on the parabolic diagram of the rotation are either real and belong to the upper half-plane or infinite.
  \item There are no points of tangency on the parabolic diagram.
     \item The rotation has no more than two fixed axes ($\dim \Pi_0 \leq 2$).
  \end{enumerate}
Then the rotation is stable.
 \end{theorem}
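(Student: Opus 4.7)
The plan is to derive Theorem \ref{newStabThm} from the general Stability Theorem \ref{stabThm} applied to the Manakov pencil $\Pi=\{P_0-\lambda P_\infty\}$ on $\so(n)^{*}$, where $P_0$ is the standard Lie-Poisson bracket and $P_\infty$ comes from the modified commutator $[X,Y]=XJ^{2}Y-YJ^{2}X$. The task is to translate the four abstract hypotheses of Theorem \ref{stabThm} at the equilibrium $\Omega$ into geometric conditions visible on the parabolic diagram of the rotation. Before anything else I would choose a basis of $\so(n)$ adapted to the orthogonal decomposition (\ref{decomp}), so that the pair $(P_0(\Omega),P_\infty(\Omega))$ splits into blocks labelled by the rotating planes $\Pi_i$, by pairs of them, and by the fixed subspace $\Pi_0$. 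Each individual parabola $y=\chi_i(x)$ should correspond to a Kronecker block built from $\Pi_i$, and each intersection of two curves on the diagram (two parabolas, or a parabola with a vertical line coming from $\Pi_0$) should produce an extra kernel element of $P_\lambda$ for the value of $\lambda$ determined by the $x$-coordinate of the intersection, hence contributing to the spectrum $\Lambda_\Pi(\Omega)$.

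With this block decomposition at hand, the first two hypotheses of Theorem \ref{stabThm} come essentially for free. The condition $\rank \Pi(\Omega)=\rank \Pi$ and regularity of $\Omega$ in the sense of Definition \ref{regEq} are consequences of the assumption that the rotation is regular: the rotating and fixed subspaces are spanned by principal axes of inertia, so the kernels of $P_\alpha(\Omega)$ for $\alpha\notin\Lambda_\Pi(\Omega)$ are spanned by the differentials of Casimirs of the Manakov bi-Hamiltonian structure and do not jump with $\alpha$. The hypothesis $\dim \Pi_0\le 2$ enters already at this stage, since a larger fixed subspace would push $\Omega$ onto a proper rank stratum of $\Pi$.

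By Remark \ref{JK}, the diagonalizability hypothesis says that all Jordan blocks of $(P_0(\Omega),P_\infty(\Omega))$ have size $2\times 2$. A tangency of two curves on the diagram is precisely the geometric signature of a larger Jordan block: the resultant equation controlling the spectrum acquires a multiple root, and this inflates the corresponding block. Thus the no-tangency hypothesis of Theorem \ref{newStabThm} implements condition 3 of Theorem \ref{stabThm}.

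The main obstacle is condition 4, compactness of each $\lambda$-linearization. For a real intersection of two parabolas with coordinates $(\mu,y_0)$, the algebra $\g_\lambda(\Omega)$ should turn out to be a three-dimensional simple algebra (a real form of $\sL(2,\Complex)$), and the cocycle is computed directly from (\ref{chiFunc}). Verifying compactness then reduces to exhibiting an element $\nu\in\zenter(\Ker A)$ whose associated symmetric form $A^{\nu}$ is positive-definite on $\g_\lambda(\Omega)/\Ker A$; by the explicit form of (\ref{chiFunc}), this positivity should be equivalent to $y_0>0$, matching the upper-half-plane condition in Theorem \ref{newStabThm}. Intersections ``at infinity'' correspond to purely Kronecker blocks and do not enter the spectrum, so they are harmless. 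The parallel analysis for intersections of a parabola with a vertical line from $\Pi_0$ uses $\dim \Pi_0\le 2$ to keep the ambient orthogonal factor small enough for compactness to be achievable. The most delicate bookkeeping will occur when several curves meet at a single point: the resulting higher-dimensional $\g_\lambda(\Omega)$ and its cocycle must still admit a central element making $A^{\nu}$ positive-definite, and it is here that the three hypotheses of Theorem \ref{newStabThm} must combine to yield compactness, after which Theorem \ref{stabThm} delivers Lyapunov stability.
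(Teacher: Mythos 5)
Your proposal takes essentially the same route as the paper: Theorem \ref{newStabThm} is proved there as a formal application of Theorem \ref{stabThm} to the Manakov pencil, using exactly the dictionary you set up (rank condition $\leftrightarrow$ at most two fixed axes, regularity of the equilibrium $\leftrightarrow$ regularity of the rotation, spectrum $\leftrightarrow$ abscissas of intersection points, diagonalizability $\leftrightarrow$ absence of tangencies, compactness of the $\lambda$-linearizations $\leftrightarrow$ real intersections in the upper half-plane or at infinity). The additional structural details you sketch (block decomposition adapted to the planes $\Pi_i$, three-dimensional linearization algebras at pairwise intersections) are consistent with this scheme, which the paper itself presents only as brief comments accompanying the formal application.
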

 \begin{remark}
 Vice versa, if the parabolic diagram of a rotation contains at least one complex intersection point or an intersection at the lower half-plane, then the rotation is unstable. This is proved in \cite{MBR}.
 \end{remark}
 For example, the rotation illustrated in Figure \ref{pd1} is stable, in Figure \ref{pd2} -- unstable.\par
 \smallskip
 The proof of Theorem \ref{newStabThm} is the formal application of  Theorem \ref{stabThm}. Below are some brief comments on how the conditions of Theorem \ref{newStabThm} are related to the conditions of Theorem \ref{stabThm}.
 \begin{enumerate}
 \item Condition 1 of Theorem\,\ref{stabThm} which reads ``$\rank \Pi(x) = \rank \Pi$'' is equivalent to the condition that ``the rotation has no more than two fixed axes''.
 \item Condition 2 of Theorem\,\ref{stabThm} which reads ``the equilibrium is regular'' is equivalent to the fact that a rotation is regular.
 \item The spectrum of the pencil is exactly the set of the horizontal coordinates of the intersection points on the parabolic diagram. Thus, parabolic diagrams naturally appear in the problem.
  \item Condition 3 of Theorem\,\ref{stabThm} which reads  ``the pencil is diagonalizable'' is equivalent to the condition that ``there are no points of tangency on the parabolic diagram''.
  \item Condition 4 of Theorem\,\ref{stabThm} which reads ``for each $\lambda \in \Lambda_\Pi(x)$ the $\lambda$-linearization $\diff_\lambda \Pi(x)$ is compact'' is equivalent to the condition that ``All intersections on the parabolic diagram of the rotation are either real and belong to the upper half-plane or infinite''. 
 \end{enumerate}
 
Note that parabolic diagrams, which appear naturally as spectral data of the Poisson pencil associated with a rigid body, give a visual interpretation of stability results even in the four-dimensional case, which was studied earlier by direct methods in \cite{Marshall, Ratiu, Casu2}.\par
Thus the bi-Hamiltonian approach, in this case, not only allows simpler calculations but also provides a more natural interpretation of the results. See also \cite{AS} where the stability problem for the multidimensional rigid body is solved by means of algebraic geometry.

\section{Proof of the stability theorem}\label{proof}
A technique similar to that used in \cite{SBS} will be used to prove Theorem \ref{stabThm}, however the proof is self-contained.

\subsection{Step 1. The forms $Q_f$.}
For notational simplicity denote the spectrum of $\Pi$ at $x$ by $\Lambda$ and the cotangent space to the ambient manifold $M$ by $V^*$.\par
Suppose that the equilibrium point $x$ is regular. Then, by definition, $\Ker P_\alpha = \Ker P_\beta$ for all $\alpha, \beta \notin \Lambda$. Denote this common kernel by $K$. The regularity condition also implies that the symplectic leafs of all brackets $P_\alpha, \alpha \notin \Lambda$ have a common tangent space at the point $x$. Denote this tangent space by $O$. It will be proved that under the conditions of Theorem \ref{stabThm} there exists an integral $f$ such that $\diff^2f|_O(x) > 0$. If this is so, then stability follows from Theorem \ref{EAEC}.\par
Without loss of generality assume that $\infty \notin \Lambda$. Then the map $P_\infty \colon V^*/K \to O$ is an isomorphism. Instead of $\diff^2 f$ consider the form $Q_f$ defined on $V^* / K$ by
\begin{align}
	Q_f(\xi,\eta) = \diff^2 f(P_\infty \xi, P_\infty \eta).
\end{align}
Obviously, $\diff^2 f$ and $Q_f$ are both simultaneously positive definite.\par
Let $\alpha_0$ be larger than all elements of $\Lambda$. Define $\F$ as the space spanned by all (local) Casimir functions of all brackets $P_\alpha$, where $\alpha_0 < \alpha < +\infty$. It will be proved that there exists $f \in \F$ such that $Q_f > 0$ on $V^*/K$.\par
\subsection{Step 2. Decomposition of $V^*/K$.} Since $\Ker P_\alpha \supset K$ for each $\alpha$, all forms $P_\alpha$ are well defined on $V^*/K$. Since $P_\infty$ is non-degenerate on $V^*/K$, consider the recursion operator
\begin{align}
R = P_\infty^{-1}P_0 \colon V^*/K \to V^*/K.
\end{align}
It is easy to see that the spectrum of $R$ coincides with the spectrum of the pencil: $\sigma(R) = \Lambda$. The $\lambda$-eigenspace of $R$ is \begin{align}V_\lambda = \Ker P_\lambda / K.\end{align}
\begin{lemma}\label{lemma1}
	Under the conditions of Theorem \ref{stabThm} the operator $R$ is diagonalizable over $\R$.  
\end{lemma}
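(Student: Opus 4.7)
The natural plan is to combine the Jordan--Kronecker classification (Remark \ref{JK}) with the four hypotheses of Theorem \ref{stabThm}: decompose $V^*$ with respect to the pair $(P_0(x), P_\infty(x))$ into Kronecker and Jordan blocks, then read off the structure of $R$ on the quotient $V^*/K$ block by block.

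First I would use regularity (condition 2) together with condition 1 to argue that every Kronecker block at $x$ has size $1$. A Kronecker block of size $2k+1 \geq 3$ has, in its normal form, a one-dimensional kernel $\Ker P_\alpha$ that genuinely moves with $\alpha$ as $\alpha$ ranges over generic (non-spectral) values; this directly contradicts $\Ker P_\alpha = \Ker P_\beta$ for $\alpha, \beta \notin \Lambda$. Consequently the sum of all size-$1$ Kronecker blocks is precisely $K$, and we obtain a splitting
\begin{align*}
V^* = K \oplus \bigoplus_{\lambda \in \Lambda} E_\lambda,
\end{align*}
where $E_\lambda$ is the direct sum of all Jordan blocks at eigenvalue $\lambda$.

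Next, the diagonalizability condition (condition 3), by Remark \ref{JK}, forces every Jordan block to be of size $2 \times 2$. In the normal form of such a block at eigenvalue $\lambda$ one has $P_0 = \lambda P_\infty$, so on the quotient $V^*/K = \bigoplus_\lambda E_\lambda$ the recursion operator $R = P_\infty^{-1}P_0$ acts as multiplication by $\lambda$ on each summand $E_\lambda$. Hence $R$ is diagonalizable over $\Complex$, with $\lambda$-eigenspace exactly $V_\lambda = \Ker P_\lambda / K$.

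Finally, condition 4 combined with Remark \ref{realSpectrum} yields $\Lambda \subset \R$, so all eigenvalues of $R$ are real and $R$ is diagonalizable over $\R$. The most delicate step, and the one I expect to require the most care, is the first one: converting the pointwise equality of kernels encoded in the regularity condition into the absence of larger Kronecker blocks. The remainder of the argument is a block-by-block identification in normal form, together with the already-established realness of the spectrum.
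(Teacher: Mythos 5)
Your proposal is correct, but it takes a genuinely different route from the paper's own proof. The paper never invokes the Jordan--Kronecker normal form: it argues directly on $V^*/K$ by assuming a Jordan chain $R\eta = \lambda\eta + \xi$ with $\xi \in V_\lambda$, noting that then $P_\lambda(\eta,\zeta) = P_\infty(\xi,\zeta)$ for all $\zeta$, so $P_\infty(\xi,\cdot)$ vanishes on $V_\lambda$, which contradicts the non-degeneracy of $P_\infty$ on $V_\lambda$ supplied by the diagonalizability condition; realness of the eigenvalues comes, exactly as in your argument, from Remark \ref{realSpectrum}. You instead reconstruct the whole block structure: regularity excludes nontrivial Kronecker blocks (this step is sound, since in a block of size at least $3$ the kernel line of $P_\alpha$ is a non-constant polynomial curve in $\alpha$ and kernels decompose blockwise, contradicting $\Ker{P_\alpha(x)} = \Ker{P_\beta(x)}$ for all non-spectral $\alpha,\beta$), the diagonalizability condition reduces all Jordan blocks to size $2\times 2$ via Remark \ref{JK}, and $R$ is then scalar on each block. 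The paper's argument buys economy and self-containedness: it needs no normal-form theorem and never has to discuss Kronecker blocks, since regularity has already been consumed in Step 1 to define $K$ and $R$, and only the non-degeneracy of $P_\infty$ on $V_\lambda$ is used. Your argument buys the structural picture: it delivers the decomposition (\ref{decomp2}) and the identification of the eigenspaces $V_\lambda = \Ker{P_\lambda}/K$ in one stroke, at the cost of the heavier Jordan--Kronecker machinery (strictly speaking stated over $\Complex$, so one should complexify and then use $\Lambda \subset \R$ to descend to the real form, which is essentially what you do at the end). Both proofs are valid.
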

\begin{proof}
	First, $\Lambda \subset \R$ (see Remark \ref{realSpectrum}). Consequently, all eigenvalues of $R$ are real and it suffices to prove that $R$ is diagonalizable. Suppose the contrary, i.e. that $R$ has a Jordan block. Then there exists $\xi \in V_\lambda, \eta \in V^*/K$ such that $R\eta = \lambda \eta + \xi$.
	Therefore, for all $\zeta \in V^*/K$, $P_\lambda(\eta,\zeta) = P_\infty(\xi,\zeta)$. Consequently, $P_\infty(\xi,\zeta) = 0$  for all $\zeta \in V_\lambda$. But the diagonalizability condition implies that $P_\infty$ is non-degenerate on $V_\lambda$. The obtained contradiction proves the lemma.
\end{proof}
	So, under the conditions of Theorem  \ref{stabThm} there exists a decomposition
	\begin{align}\label{decomp2}
		V^* / K = \bigoplus_{\lambda \in \Lambda} V_\lambda.
	\end{align}
	It will be shown that all forms $Q_f, f \in \F$ respect this decomposition.\par
\begin{lemma}\label{lemma2}
	For each $f \in \F$ and each $\alpha < \alpha_0$ there exist functions $\wave f, \bar f \in \F$ such that for any function $g$ the following ``recursion'' relations hold:
	\begin{align}
			\{f, g\}_\infty = \{\wave f,g\}_\alpha, \quad \{f, g\}_\alpha = \{\bar f,g\}_\infty. 
	\end{align}
\end{lemma}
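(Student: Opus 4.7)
The plan is to establish the recursion relations first for $f$ a single local Casimir of one bracket $P_\beta$ (with $\beta > \alpha_0$), and then to extend to all of $\F$ by linearity.

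For a single Casimir $f$ of $P_\beta$, the key observation is purely algebraic. Since $P_\beta = P_0 - \beta P_\infty$, the Casimir condition $\{f, g\}_\beta = 0$ for every $g$ rewrites as $\{f, g\}_0 = \beta \{f, g\}_\infty$. Substituting this into $\{f, g\}_\alpha = \{f, g\}_0 - \alpha \{f, g\}_\infty$ yields the identity
\begin{align*}
\{f, g\}_\alpha = (\beta - \alpha)\,\{f, g\}_\infty.
\end{align*}
Because $\alpha < \alpha_0 < \beta$, the scalar $\beta - \alpha$ is nonzero, so $\wave f = (\beta - \alpha)^{-1} f$ satisfies the first relation and $\bar f = (\beta - \alpha)\,f$ satisfies the second. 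Both are scalar multiples of $f$, hence themselves Casimirs of $P_\beta$, and so they lie in $\F$.

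For a general $f \in \F$, write $f = \sum_i c_i f_i$ where each $f_i$ is a local Casimir of some $P_{\beta_i}$ with $\beta_i > \alpha_0$. Applying the previous step to each summand and using bilinearity of the brackets, it suffices to set
\begin{align*}
\wave f = \sum_i \frac{c_i}{\beta_i - \alpha}\, f_i, \qquad \bar f = \sum_i c_i (\beta_i - \alpha)\, f_i.
\end{align*}
These are linear combinations of the same basis Casimirs as $f$, so they again belong to $\F$, and the recursion relations follow term by term.

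There is no substantial obstacle in this lemma: its entire content is the single algebraic identity $\{f, g\}_\alpha = (\beta - \alpha)\{f, g\}_\infty$ for a $P_\beta$-Casimir, which then propagates through linearity. The only delicate point, the non-vanishing of the denominators $\beta_i - \alpha$, is built into the setup through the choice of $\alpha_0$ and the restriction $\alpha < \alpha_0$.
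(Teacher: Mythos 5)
Your proof is correct and follows essentially the same route as the paper: for a single Casimir of $P_\beta$ one gets $\{f,g\}_\alpha = (\beta-\alpha)\{f,g\}_\infty$, yielding $\wave f = f/(\beta-\alpha)$ and $\bar f = (\beta-\alpha)f$, and the general case follows by linearity since $\F$ is by definition spanned by such Casimirs. (Your sign bookkeeping is in fact cleaner than the paper's, which contains a small misprint $(\alpha-\beta)$ in the intermediate computation.)
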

\begin{proof}
	First, let $f$ be a Casimir function of $P_\beta, \beta > \alpha_0$. Then
	\begin{align}
	\{f, g\}_\alpha = \{f,g\}_\beta + (\beta - \alpha) \{f,g\}_\infty = (\alpha - \beta) \{f,g\}_\infty.
	\end{align}	
	Thus, $\wave f =f / (\beta - \alpha), \bar f = (\beta - \alpha)f$ are as required. \par
	For an arbitrary $f \in \F$ the statement is true by linearity.
\end{proof}
Let $D_fP_\alpha$ be the operator dual to the linearization of $P_\alpha\diff f$ at $x$. Then it is easy to see that $Q_f$ is given by the formula
\begin{align}\label{QFFormula}
	Q_f(\xi,\eta) = P_\infty(D_fP_\infty (\xi),\eta).
\end{align}
The operator $D_fP_\alpha \colon V^* \to V^*$ can be given by an explicit formula
\begin{align}\label{DFPFormula}
	D_fP_\alpha(\diff g(x)) =\diff \{ f,g\}_\alpha(x).
\end{align}
Note that this formula, together with the Jacobi identity, implies that $D_fP_\alpha$ is skew-symmetric with respect to $P_\alpha$
%Since $\diff f \in \Ker P_\alpha$, the value on the right hand-side does not depend on $g$, but only on $\diff g(x)$.
\begin{lemma}\label{lemma3}
	For $f \in \F$ the operator $D_fP_\infty$ is skew-symmetric with respect to all forms $P_\alpha$.
\end{lemma}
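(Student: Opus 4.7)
The plan is to reduce to a single Casimir by linearity and then exploit the affine dependence of the pencil on the parameter. Since the assignment $f \mapsto D_f P_\infty$ is manifestly linear in $f$, and ``skew-symmetric with respect to $P_\alpha$'' is a linear condition on an operator, it suffices by the definition of $\F$ to treat the case when $f$ is a single local Casimir of $P_\beta$ for some fixed $\beta$ with $\alpha_0 < \beta < +\infty$.

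Fix such $f$ and $\beta$. From $\{f, g\}_\beta \equiv 0$ and the pencil identity $P_\alpha = P_\beta + (\beta - \alpha) P_\infty$ one reads off that $\{f, g\}_\alpha = (\beta - \alpha)\{f, g\}_\infty$ for every function $g$. Applying the differential at $x$ and invoking formula (\ref{DFPFormula}) then yields the operator identity $D_f P_\alpha = (\beta - \alpha)\, D_f P_\infty$ on $V^*$. For any $\alpha \neq \beta$ this exhibits $D_f P_\infty$ as a nonzero scalar multiple of $D_f P_\alpha$, and since $D_f P_\alpha$ is skew-symmetric with respect to $P_\alpha$ by the remark following (\ref{DFPFormula}), the same property transfers to $D_f P_\infty$. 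The degenerate case $\alpha = \infty$ is itself the content of that remark.

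It remains to dispatch the single exceptional value $\alpha = \beta$. I would handle it by a polynomial-in-$\alpha$ continuity argument: for any fixed $\xi, \eta \in V^*$ the scalar
\begin{align*}
\Phi(\alpha) = P_\alpha(D_f P_\infty \xi, \eta) + P_\alpha(\xi, D_f P_\infty \eta)
\end{align*}
is an affine function of $\alpha$ (since $P_\alpha = P_0 - \alpha P_\infty$), and by the previous step $\Phi(\alpha) = 0$ for every $\alpha \neq \beta$; hence $\Phi(\beta) = 0$ as well. The main technical point, and essentially the only mild obstacle, is remembering to split off this single degenerate value; an alternative, more uniform route would apply Lemma \ref{lemma2} to rewrite $\{f, g\}_\infty = \{\tilde f, g\}_\alpha$ with $\tilde f \in \F$ and thereby trade the known skew-symmetry of $D_{\tilde f} P_\alpha$ for the desired property of $D_f P_\infty$, but the affine continuity argument is shorter and keeps the proof self-contained.
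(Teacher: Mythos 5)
Your proof is correct and follows essentially the same route as the paper: you rewrite $D_fP_\infty$ as a multiple of $D_fP_\alpha$ using the pencil relation (which is exactly the content of Lemma \ref{lemma2}) and then invoke the Jacobi-identity skew-symmetry remark after (\ref{DFPFormula}). The paper simply records skew-symmetry with respect to $P_\infty$ and $P_0$ and concludes for all $P_\alpha$ by linearity, which plays the same role as your affine-in-$\alpha$ argument at the exceptional value $\alpha=\beta$.
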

\begin{proof}
	First, $D_fP_\infty$ is skew-symmetric with respect to $P_\infty$. Taking into account (\ref{DFPFormula}) and Lemma \ref{lemma2}, $D_fP_\infty$ can be rewritten as $D_{\wave f}P_0$. Thus, this operator is skew-symmetric with respect to $P_\infty$ and $P_0$, and, by linearity, with respect to all forms of the pencil.
\end{proof}

\begin{lemma}\label{lemma4}
	For $f \in \F$ the recursion operator is symmetric with respect to $Q_f$: 
	\begin{align}
	Q_f(R\xi, \eta) = Q_f(\xi, R\eta),
	\end{align}
	and, consequently, the summands of (\ref{decomp2}) are pairwise orthogonal with respect to $Q_f$.
\end{lemma}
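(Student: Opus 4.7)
The plan is to reduce the claimed $R$-symmetry of $Q_f$ to a commutation relation between $R$ and the operator $D_fP_\infty$ appearing in formula (\ref{QFFormula}), and then to deduce orthogonality of the eigenspaces $V_\lambda$ by the standard eigenvalue argument.

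First I would unpack the definition $R=P_\infty^{-1}P_0$ at the level of the bilinear forms on $V^*/K$. It gives the identity $P_0(\xi,\eta)=P_\infty(R\xi,\eta)$, and skew-symmetry of $P_0$ combined with non-degeneracy of $P_\infty$ on $V^*/K$ immediately yields that $R$ is self-adjoint with respect to $P_\infty$:
\begin{align*}
P_\infty(R\xi,\eta) \;=\; P_\infty(\xi,R\eta).
\end{align*}

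Next I would establish the key commutation $R\,D_fP_\infty = D_fP_\infty\,R$. By Lemma \ref{lemma3} the operator $D_fP_\infty$ is skew-symmetric with respect to every form of the pencil, in particular with respect to both $P_\infty$ and $P_0$. Rewriting skew-symmetry with respect to $P_0$ via the identity $P_0=P_\infty R$, and then applying skew-symmetry with respect to $P_\infty$ together with the self-adjointness of $R$ established above, one arrives at
\begin{align*}
P_\infty(RD_fP_\infty\,\xi,\eta) \;=\; P_\infty(D_fP_\infty\,R\xi,\eta)
\end{align*}
for all $\xi,\eta\in V^*/K$. Non-degeneracy of $P_\infty$ on $V^*/K$ then gives the commutation.

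Combining these two ingredients, the symmetry of $Q_f$ under $R$ follows by a one-line calculation:
\begin{align*}
Q_f(R\xi,\eta) = P_\infty(D_fP_\infty R\xi,\eta) = P_\infty(R D_fP_\infty\xi,\eta) = P_\infty(D_fP_\infty\xi,R\eta) = Q_f(\xi,R\eta).
\end{align*}
The orthogonality of the summands in (\ref{decomp2}) is then the standard consequence: for $\xi\in V_\lambda$ and $\eta\in V_\mu$ with $\lambda\neq\mu$, the identity $(\lambda-\mu)Q_f(\xi,\eta)=0$ forces $Q_f(\xi,\eta)=0$. I do not expect any genuine obstacle here; the whole argument is a bookkeeping exercise in the skew- and self-adjointness properties produced by Lemmas \ref{lemma2} and \ref{lemma3}. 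The only thing to be careful about is keeping the manipulations of $P_0$ and $R$ consistent through the identification $P_0=P_\infty R$, so that the signs in the two applications of skew-symmetry of $D_fP_\infty$ cancel correctly to give a commutation rather than an anticommutation.
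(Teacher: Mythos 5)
Your argument is correct and follows essentially the same route as the paper: symmetry of $R$ with respect to $P_\infty$, skew-symmetry of $D_fP_\infty$ (Lemma \ref{lemma3}), the resulting commutation $RD_fP_\infty=D_fP_\infty R$, and formula (\ref{QFFormula}), followed by the standard eigenvalue argument for orthogonality. The only difference is that you spell out the derivation of the commutation relation from Lemma \ref{lemma3}, which the paper merely asserts.
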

\begin{proof}
	Lemma \ref{lemma3} implies that $D_fP_\infty$ commutes with $R$. Also note that $R$ is symmetric with respect to $P_\infty$, and $D_fP_\infty$ is skew-symmetric with respect to $P_\infty$. Therefore
	\begin{align}
	\begin{aligned}
		Q_f(R\xi,\eta) &= P_\infty(D_fP_\infty (R\xi),\eta) = -P_\infty(R\xi,D_fP_\infty \eta) = \\ &= -P_\infty(\xi ,RD_fP_\infty(\eta)) = P_\infty (D_fP_\infty (R\eta),\xi) = Q_f(R\eta, \xi).
	\end{aligned}
	\end{align}
\end{proof}
  
 \subsection{Step 3. Positivity of $Q_f$ on $V_\lambda$.}

\begin{lemma}\label{lemma5}
	Under the conditions of Theorem \ref{stabThm} for each $\lambda \in \Lambda$ there exists $f_\lambda \in \F$ such that $Q_{f_\lambda}$ is positive on $V_\lambda$.
\end{lemma}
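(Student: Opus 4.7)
The plan is to take $f_\lambda$ to be a local Casimir of a single bracket $P_\beta$, with $\beta > \alpha_0$, whose differential at $x$ realizes the element of $K$ that witnesses compactness of the $\lambda$-linearization. By construction the $\lambda$-linearization is the linear pencil $\Pi(\g_\lambda, A_\lambda)$ on $\g_\lambda^{*}$, where $\g_\lambda = \Ker P_\lambda(x)$ carries the linearized Lie bracket and $A_\lambda = P_\infty|_{\g_\lambda}$. Diagonalizability of $\Pi$ at $x$ forces $P_\infty$ to be non-degenerate on $V_\lambda$ (as already used in the proof of Lemma~\ref{lemma1}), so the kernel of $A_\lambda$ as a bilinear form equals $K$ and $\g_\lambda/\Ker A_\lambda \cong V_\lambda$. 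Compactness then supplies $\nu_\lambda \in \zenter(K) \subset \g_\lambda$ such that $A_\lambda^{\nu_\lambda}(\xi,\eta) = A_\lambda([\nu_\lambda,\xi]_{\g_\lambda}, \eta)$ is positive-definite on $V_\lambda$.

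To construct $f_\lambda$, fix any $\beta > \alpha_0$. Condition~1 of Theorem~\ref{stabThm} gives $\rank P_\beta(x) = \rank \Pi$, so $x$ is a regular point of the Poisson bracket $P_\beta$, and the Darboux theorem for Poisson manifolds provides local Casimirs of $P_\beta$ near $x$ whose differentials at $x$ span $K$. I take $f_\lambda$ to be the linear combination of these with $\diff f_\lambda(x) = \nu_\lambda$; then $f_\lambda \in \F$. The heart of the argument is the identity $Q_{f_\lambda}|_{V_\lambda} = (\beta-\lambda)^{-1}\,A_\lambda^{\nu_\lambda}$. Starting from (\ref{DFPFormula}) I expand $\diff\{f_\lambda,g\}_\lambda(x)$ by the product rule for $\xi = \diff g(x) \in \Ker P_\lambda(x)$: two of the three terms vanish because $P_\lambda(x)\diff f_\lambda(x) = P_\lambda(x)\diff g(x) = 0$, leaving only the derivative of $P_\lambda$ paired with $(\nu_\lambda,\xi)$, which is by definition the Lie bracket of $\g_\lambda$. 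Hence $D_{f_\lambda}P_\lambda(\xi) = [\nu_\lambda, \xi]_{\g_\lambda}$. Applying Lemma~\ref{lemma2} with $\alpha = \lambda$ gives $D_{f_\lambda}P_\infty = (\beta-\lambda)^{-1} D_{f_\lambda}P_\lambda$, so for $\xi,\eta \in V_\lambda$
$$Q_{f_\lambda}(\xi,\eta) = P_\infty\bigl(D_{f_\lambda}P_\infty \xi,\,\eta\bigr) = \frac{1}{\beta-\lambda}\, A_\lambda\bigl([\nu_\lambda,\xi]_{\g_\lambda},\,\eta\bigr) = \frac{1}{\beta-\lambda}\, A_\lambda^{\nu_\lambda}(\xi,\eta).$$
Since $\beta > \alpha_0 > \lambda$ the prefactor is positive, so positive-definiteness of $A_\lambda^{\nu_\lambda}$ on $V_\lambda$ gives the claim.

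The main obstacle, I expect, is the identity $D_f P_\lambda(\xi) = [\diff f(x),\xi]_{\g_\lambda}$: it rests on the cancellation of the two Hessian-type contributions in the product rule for $\diff\{f,g\}_\lambda(x)$, a cancellation that occurs precisely because both $\diff f(x)$ and $\xi$ lie in $\Ker P_\lambda(x)$. Without it the expression would depend on the choice of $1$-form extensions of $\nu_\lambda$ and $\xi$ and the intrinsic identification with the Lie bracket of $\g_\lambda$ would fail. By contrast, the positive sign of $(\beta-\lambda)^{-1}$ is automatic from the choice $\alpha_0 > \max\Lambda$, and the existence of a local Casimir with prescribed differential is standard at a regular point.
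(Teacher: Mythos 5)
Your proof is correct and follows essentially the same route as the paper: choose $f$ with $\diff f(x)=\nu$, use $D_fP_\lambda(\xi)=[\diff f(x),\xi]$ on $\Ker P_\lambda$, and transfer to $P_\infty$ via the pencil relation of Lemma~\ref{lemma2}. The only (cosmetic) difference is that you keep $f_\lambda$ itself a Casimir of a single $P_\beta$ and carry the positive factor $(\beta-\lambda)^{-1}$ explicitly, whereas the paper absorbs it by taking $f_\lambda=\bar f$; your added details (Casimirs of $P_\beta$ with prescribed differential spanning $K$, nondegeneracy of $P_\infty$ on $V_\lambda$ identifying $\g_\lambda/\Ker A$ with $V_\lambda$) are exactly what the paper uses implicitly.
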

\begin{proof}
Let $[\,,]$ be the commutator in $\g_\lambda = \Ker P_\lambda$. The compactness condition implies that there exists $\nu \in K$ such that 
	\begin{align}
	A^{\nu}(\xi, \xi) = P_\infty([\nu,\xi], \xi)
	\end{align}
	is positive definite on $V_\lambda$. Take $f \in \F$ such that $\diff f(x) = \nu$. By Lemma \ref{lemma2} there exists $\bar f \in \F$ such that $\{f, g\}_\lambda = \{\bar f,g\}_\infty$. Take $f_\lambda = \bar f$. Let $\xi \in V_\lambda$. Then
	 \begin{align}
	 	Q_{\bar f}(\xi,\xi) = P_\infty(D_{\bar f}P_\infty (\xi),\xi) = P_\infty(D_{f}P_\lambda (\xi),\xi).
	 \end{align}
	 Since $\xi \in \Ker P_\lambda$, (\ref{DFPFormula}) implies that $D_{f}P_\lambda (\xi) = [\diff f(x), \xi]$. Thus,
	 \begin{align}
	 	Q_{\bar f}(\xi,\xi) = P_\infty([\diff f, \xi],\xi) = P_\infty([\nu, \xi],\xi) = A^{\nu}(\xi, \xi) > 0,
	 \end{align}	 
\end{proof}

\subsection{Step 4. Recursion invariance.}
\begin{lemma}\label{lemma6}
	Suppose that $f \in \F$ and $p(z)$ is a polynomial. Then there exists $\wave f \in \F$ such that
	\begin{align}
	Q_f(p(R)\xi, \eta) = Q_{\wave f}(\xi,\eta).
	\end{align}	
\end{lemma}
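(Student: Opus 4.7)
The plan is, after reducing by linearity to the case where $f$ is a single local Casimir of some $P_\beta$, to construct $\wave f$ as an explicit linear combination of Casimirs of different $P_{\beta_i}$'s via a Cauchy-matrix interpolation. Writing $T_g := D_g P_\infty$ and using the commutation $T_f R = R T_f$ from the proof of Lemma \ref{lemma4} together with $Q_g(\xi,\eta) = P_\infty(T_g \xi, \eta)$, the target identity $Q_f(p(R)\xi, \eta) = Q_{\wave f}(\xi, \eta)$ is equivalent to the operator equation $T_{\wave f} = p(R) T_f$ on $V^*/K$.

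Take $f$ to be a local Casimir of $P_\beta$ with $\beta > \alpha_0$, and set $\nu := \diff f(x) \in K$. The scalar identity $D_f P_\alpha = (\beta - \alpha) T_f$ (derived in the proof of Lemma \ref{lemma2}), combined with the formula $D_f P_\lambda(\xi) = [\diff f(x), \xi]_\lambda$ for $\xi \in \Ker P_\lambda$ used at the end of the proof of Lemma \ref{lemma5}, gives
\begin{align*}
T_f|_{V_\lambda}(\xi) = \frac{1}{\beta - \lambda}[\nu, \xi]_\lambda.
\end{align*}
Letting $S_\nu$ denote the operator on $V^*/K$ whose restriction to each $V_\lambda$ is $\xi \mapsto [\nu, \xi]_\lambda$ (well-defined because $\Ker A \subset \g_\lambda$ is a Lie subalgebra by the cocycle identity, so $[\nu,K]_\lambda \subset K$), this reads $T_f = (\beta - R)^{-1} S_\nu$, with $S_\nu$ block-diagonal in the decomposition $V^*/K = \bigoplus_\lambda V_\lambda$ and hence commuting with $R$. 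Now choose $k := |\Lambda|$ pairwise distinct values $\beta_1, \ldots, \beta_k \in (\alpha_0, +\infty) \setminus \Lambda$ and for each $i$ a local Casimir $f_i$ of $P_{\beta_i}$ with $\diff f_i(x) = \nu$; setting $\wave f := \sum_i c_i f_i \in \F$, the equation $T_{\wave f} = p(R) T_f$ evaluated on each eigenvalue $\lambda_j$ of $R$ reduces to
\begin{align*}
\sum_{i=1}^{k} \frac{c_i}{\beta_i - \lambda_j} = \frac{p(\lambda_j)}{\beta - \lambda_j}, \qquad j = 1, \ldots, k.
\end{align*}
The coefficient matrix is a Cauchy matrix, nonsingular since the $\beta_i$ are distinct and disjoint from $\Lambda$, so a unique solution $(c_1, \ldots, c_k)$ exists; linearity in $f$ then finishes the general case.

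The main obstacle is the auxiliary existence input: for each $\beta_i \in (\alpha_0, +\infty)\setminus\Lambda$ and each prescribed $\nu \in K$ one needs a local Casimir of $P_{\beta_i}$ whose differential at $x$ equals $\nu$. This rests on the standard local theory of Poisson brackets: since $\beta_i \notin \Lambda$ and $\rank \Pi(x) = \rank \Pi$, the bracket $P_{\beta_i}$ has locally constant maximal rank near $x$, so there are $\corank P_{\beta_i}(x) = \dim K$ independent local Casimirs near $x$, and their differentials at $x$ span $\Ker P_{\beta_i}(x) = K$ (the last equality by the regularity assumption). Once that input is granted, the Cauchy-matrix step and the block-diagonality of $S_\nu$ are routine to verify.
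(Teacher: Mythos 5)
Your proof is correct, but it takes a genuinely different route from the paper. The paper reduces to $p(z)=z$, replaces $f$ by $f-f_\infty$ where $f_\infty$ is a Casimir of $P_\infty$ with the same differential at $x$, and then, because $f-f_\infty$ need not lie in $\F$, runs a limiting argument with a family of Casimirs $f_\alpha \to f_\infty$ of $P_\alpha$ as $\alpha\to\infty$, concluding via the finite-dimensionality (hence closedness) of the space $\{Q_g : g\in\F\}$, and finally inducts on the degree of $p$. You instead work eigenspace-by-eigenspace: from $D_fP_\lambda=(\beta-\lambda)D_fP_\infty$ (your constant $(\beta-\alpha)$ is indeed the consistent one; the paper's display in Lemma \ref{lemma2} has a harmless sign slip) and the linearization formula from Lemma \ref{lemma5}, you get $D_fP_\infty = (\beta-R)^{-1}S_\nu$ with $S_\nu$ block-diagonal, and then match $p(R)\,D_fP_\infty$ by an explicit combination $\sum_i c_i f_i$ of Casimirs of several brackets $P_{\beta_i}$ sharing the differential $\nu$, the coefficients coming from an invertible Cauchy system. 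What your approach buys: an explicit $\wave f$, no limit/closure argument, and arbitrary $p$ in one stroke. What it costs: you use the decomposition $V^*/K=\bigoplus_\lambda V_\lambda$ (i.e.\ Lemma \ref{lemma1}, hence diagonalizability and compactness of the linearizations) and the identification of $K$ with the kernel of $P_\infty|_{\g_\lambda}$, whereas the paper's proof of this lemma needs only regularity; since the lemma is invoked only under the full hypotheses of Theorem \ref{stabThm}, this loss of generality is immaterial in context. Your auxiliary input — that near $x$ the regular brackets $P_{\beta_i}$ admit local Casimirs whose differentials at $x$ realize any prescribed $\nu\in K$ — is the standard constant-rank (Darboux--Weinstein) fact, and the paper itself relies on the same kind of statement both in Lemma \ref{lemma5} ("take $f\in\F$ with $\diff f(x)=\nu$") and in choosing the family $f_\alpha$, so your argument is not heavier in its prerequisites.
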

\begin{proof}
First suppose that $p(z) = z$. Since $\diff f \in K$, there exists a Casimir function $f_\infty$ of $P_\infty$ such that $\diff f(x) = \diff f_\infty(x)$.
	Formula (\ref{DFPFormula}) implies that $D_fP_\infty = D_{f - f_\infty}P_\infty$.
	Further, for any $g$ such that $\diff g = 0$ the same formula (\ref{DFPFormula}) implies that $D_gP_\alpha = \diff^2g P_\alpha$, thus
	\begin{align}
		D_fP_\infty = \diff^2(f - f_\infty)P_\infty.
	\end{align}	
	Consequently, taking into account (\ref{QFFormula}),
	\begin{align}\label{lemma6f1}
	\begin{aligned}
	Q_f(R\xi,\eta) &= P_\infty( \diff^2(f - f_\infty)P_\infty R(\xi),\eta) = P_\infty( \diff^2(f - f_\infty)P_0(\xi),\eta) = \\ &= P_\infty( D_{f - f_\infty}P_0(\xi),\eta).
	\end{aligned}
	\end{align}
	Note that if $f - f_\infty \in \F$, then, by Lemma \ref{lemma2}, $D_{f - f_\infty}P_0(\xi)$ can be rewritten as $D_{\bar f}P_\infty$, which proves the lemma. However, $f - f_\infty$ is not in $\F$ a priori\footnote{This can be overcome by adding Casimir functions of $P_\infty$  to $\F$. However, this would make the proof of Lemma \ref{lemma2} much more complicated.}, therefore the following limit argument is applied.\par
	Choose a family $f_\alpha$ such that $f_\alpha$ is a Casimir function of $P_\alpha$ and $f_\alpha \to f_\infty$ as $\alpha \to \infty$.  Then $f-f_\alpha \in \F$, and, by Lemma \ref{lemma2}, there exists $\bar f_\alpha \in \F$ such that $\{f - f_\alpha, g\}_0 = \{\bar f_\alpha,g\}_\infty$. Thus, $D_{f - f_\alpha}P_0 = D_{\bar f_\alpha}P_\infty$.
	So (\ref{lemma6f1}) gives
	\begin{align}
		Q_f(R\xi,\eta) = \lim_{\alpha \to \infty} P_\infty( D_{f - f_\alpha}P_0(\xi),\eta) = \lim_{\alpha \to \infty}P_\infty( D_{\bar f_\alpha}P_\infty(\xi),\eta) = \lim_{\alpha \to \infty} Q_{\bar f_\alpha}(\xi, \eta).
	\end{align}
	Consequently, the form $Q_f(R\xi,\eta)$ belongs to the closure of the space $\{Q_g, g \in \F\}$. But this latter space is finite-dimensional, thus $Q_f(R\xi,\eta) = Q_{\wave f}(\xi,\eta)$ for some $\wave f \in \F$.\par
	For an arbitrary polynomial the lemma is proved by induction.

\end{proof}

\subsection{Step 5. Completion of the proof.}
Let 
\begin{align}
	p_{\lambda_0}(z) = \prod\limits_{\lambda \in \Lambda\setminus \{\lambda_0\}}\frac{z - \lambda}{\lambda_0 - \lambda}.
\end{align}
Then $p_\lambda(R)$ is the projector $V^*/K \to V_\lambda$. \par
By Lemma \ref{lemma5} there exists $f_\lambda \in \F$ such that $Q_{f_\lambda}$ is positive on $V_\lambda$. By Lemma \ref{lemma6} there exists $\wave f_\lambda \in \F$ such that
\begin{align}
	Q_{f_\lambda}(p_\lambda(R)\xi, \xi) = Q_{\wave f_\lambda}(\xi, \xi).
\end{align}
Take 
\begin{align}
	f = \sum\limits_{\lambda \in \Lambda} \wave f_\lambda.
\end{align}
Now claim that $Q_f > 0$ on $V^*/K$. By Lemma \ref{lemma4} it suffices to show that $Q_f$ is positive on each $V_\lambda, \lambda \in \Lambda$. Let $\xi \in V_{\lambda_0}$. Then
\begin{align}
	Q_f(\xi , \xi) = \sum\limits_{\lambda \in \Lambda} Q_{\wave f_\lambda}(\xi, \xi) = \sum\limits_{\lambda \in \Lambda} Q_{f_\lambda}(p_\lambda(R)\xi, \xi) = Q_{f_{\lambda_0}}(\xi, \xi) > 0.
\end{align}
In the last equality we used the fact that $p_{\lambda}(R)$ is a projector. The theorem is proved.
\section{Acknowledgements}
I am grateful to Alexey Bolsinov for fruitful discussions on the subject. I would also like to thank David Dowell for his useful comments.\par
%This work was supported by Loughborough University, Hausdorff Research Institute for Mathematics (Bonn), RFBR grant 10–01–00748-a, the Programme for the Support of Leading Scientific Schools of RF (grant NSh-1410.2012.1), the programme ``Scientific and Scientific-Pedagogical Personnel of Innovative Russia'' (grant 14.740.11.0794), and by the Government grant of the Russian Federation for support of research projects implemented by leading scientists, in the Federal State Budget Educational Institution of Higher Professional Education Lomonosov Moscow State University under the agreement No. 11.G34.31.0054. 
%\section*{}
\bibliographystyle{unsrt}  
\bibliography{StabBihamRevised2} 
\end{document}